\title{A polynomial delay algorithm generating all potential maximal cliques in triconnected planar graphs} 
\titlerunning{Potential maximal cliques in triconnected planar graphs} 
\author{Alexander Grigoriev}
{Department of Data Analytics and Digiatlisation, Maastricht University, Maastricht, The Netherlands}
{a.grigoriev@maastrichtuniversity.nl}
{https://orcid.org/0000-0002-8391-235X}
{}
\author{Yasuaki Kobayashi}
{Hokkaido University, Sapporo, Japan}
{koba@ist.hokudai.ac.jp}
{https://orcid.org/0000-0003-3244-6915}
{JSPS KAKENHI Grant Numbers JP23K28034, JP24H00686 and JP24H00697}
\author{Hisao Tamaki}
{Meiji University, Kawasaki, Japan}
{hisao.tamaki@gmail.com}
{https://orcid.org/0000-0001-7566-8505}
{JSPS KAKENHI Grant Number JP24H00697}
\author{Tom C. van der Zanden}
{Department of Data Analytics and Digiatlisation, Maastricht University, Maastricht, The Netherlands}
{t.vanderzanden@maastrichtuniversity.nl}
{https://orcid.org/0000-0003-3080-3210}
{}
\authorrunning{A. Grigoriev, Y. Kobayashi, H. Tamaki and T. C. van der Zanden} 
\keywords{potential maximal cliques, treewidth, planar graphs, triconnected planar graphs, polynomial delay generation} 
\newcommand{\calC}{{\mathcal{C}}}
\newcommand{\calT}{{\mathcal{T}}}
\newcommand{\calS}{{\mathcal{S}}}
\newcommand{\calP}{{\mathcal{P}}}
\newcommand{\calX}{{\mathcal{X}}}
\newcommand{\sphere}{\Sigma}
\newcommand{\tw}{{\mathop{tw}}}
\newcommand{\gen}{{\textsc{Gen}}}
\newcommand{\LineComment}[1]{\newline \(\triangleright\) #1}
\begin{document}

\maketitle
\begin{abstract}
    We develop a new characterization of potential maximal cliques of a triconnected planar graph and, using this characterization,
    give a polynomial delay algorithm generating all potential maximal cliques of a given triconnected planar graph.
    Combined with the dynamic programming algorithm due to Bouchitt{\'e} and Todinca, this algorithm leads to a 
    treewidth algorithm for general planar graphs that runs in time linear in the number of potential maximal cliques and
    polynomial in the number of vertices.
\end{abstract}

\newpage
\section{Introduction}
Let $G$ be a graph. A vertex set $X$ of $G$ is a \emph{potential maximal clique} (PMC) of $G$
if there is a minimal triangulation $H$ of $G$ such that $X$ is a maximal clique of $H$.
Let $\Pi(G)$ denote the set of all PMCs of $G$. Computing $\Pi(G)$ is
the first step in the treewidth algorithms of Bouchitt{\'e} and Todinca \cite{bouchitte2001treewidth}.
The second step is a dynamic programming algorithm that works on $\Pi(G)$ and
computes the treewidth $\tw(G)$ of $G$, together with a tree-decomposition of $G$ achieving the width,
in time $|\Pi(G)|n^{O(1)}$. The same approach works for various problems that can be formulated as
asking for an optimal triangulation with some criterion, including the minimum fill-in problem
\cite{bouchitte2001treewidth}. 
See \cite{FominTV15} and \cite{FominV10} 
for more applications of PMCs in this direction.

Naturally, the complexity of computing $\Pi(G)$ is of great interest. In a separate paper
\cite{bouchitte2002listing}, Bouchitt{\'e} and Todinca showed that $\Pi(G)$ can be computed
in time polynomial in the number of minimal separators (see Section~\ref{sec:prelim} for the definition) of $G$. 
Fomin and Villanger \cite{fomin2012treewidth} showed that $|\Pi(G)|$ is $O(1.7549^n)$ and gave an
algorithm to compute $\Pi(G)$ in time $O(1.7549^n)$.
(These bounds were later improved to $O(1.7346^n)$ in \cite{FominV10}.)
Although this bound on the running time matches
the combinatorial bound, an algorithm with running time $|\Pi(G)| n^{O(1)}$ is much more desirable in practice.
In fact, the success of recent practical algorithms for treewidth based
on the Bouchitt{\'e}-Todinca algorithm, \cite{tamaki2019positive} for example, relies on the fact that $|\Pi(G)|$ is vastly smaller than
this theoretical bound on many instances of interest in practice.

In this paper, we address this question if $\Pi(G)$ can be computed in time $|\Pi(G)| n^{O(1)}$, which is one of the
open questions in parameterized and exact computation listed in \cite{bodlaender2006open}.
We note that this question has been open for any natural graph class except those for which
$\Pi(G)$ is known to be computable in polynomial time. We answer this question in the affirmative
for the class of triconnected planar graphs. 
As a straightforward consequence, we obtain a new running time upper bound 
of $|\Pi(G)| n^{O(1)}$ on the treewidth computation, not only for triconnected planar graphs but
also for general planar graphs, since the treewidth of a planar graph can be computed by
working separately on its triconnected components.

This result is also interesting with regard to a broader question: how planarity helps in treewidth computation?
In the case of branchwidth, a graph parameter similar to treewidth, 
there is a clear answer to the similar question. Branchwidth is NP-hard to compute for general graphs but
polynomial time computable for planar graphs by the celebrated Ratcatcher algorithm due to Seymour and Thomas \cite{seymour1994call}. 
It has been a long open question whether an analogy
holds for treewidth: we do not know whether computing treewidth is NP-hard or polynomial time solvable on planar graphs.
In addition to the 1.5-approximation algorithm based on the Ratcatcher algorithm for branchwidth,
several approximation algorithms for treewidth are known that exploit planarity (\cite{KAMMER201660} for example, also see citations therein), but no non-trivial exact algorithm for planar treewidth is known.
Several fixed-parameter tractable exact treewidth algorithms for general graphs are known,
including a $2^{O(k^3)} n$ time algorithm \cite{10.1137/S0097539793251219} and
a $2^{O(k^2)} n^{O(1)}$ time algorithm \cite{korhonen2023improved}.
No improvement, however, on these results exploiting planarity is known.

Our algorithm is the first one that explicitly exploits planarity for computing exact treewidth
in a non-trivial manner.

Our algorithm not only computes $\Pi(G)$ for triconnected planar graph $G$ in
time $|\Pi(G)|n^{O(1)}$ but also runs with polynomial delay: it spends $n^{O(1)}$ time for
the generation of each element in the set. See Section~\ref{sec:poly_delay} for a more technical definition of
polynomial delay generation. This feature of our algorithm is appealing in the
field of combinatorial enumeration where designing an algorithm
with polynomial delay is one of the main research goals.
We note that most of the devices we use in our algorithm are still needed
even if we downgrade our requirement of polynomial delay 
to the total running time bound of
$|\Pi(G)|n^{O(1)}$.

Our generation algorithm is based on a new and simple characterization of PMCs of triconnected planar graphs.
Fomin, Todinca, and Villanger \cite{fomin2011exact} used a characterization of PMCs of general planar graphs in terms of
what they call \emph{$\theta$-structures}. A $\theta$-structure is, roughly speaking, a collection of three curves sharing their ends in the sphere of embedding such that each pair forms a \emph{noose} of the embedded graph, where a noose is a simple closed curve that
intersects the embedded graph only at vertices.
They used this characterization to design an algorithm for the maximum induced planar subgraph problem.
Unfortunately for our purposes, this characterization does not seem to lead to an efficient algorithm for computing $\Pi(G)$
of planar graphs. 

The simplicity of our characterization comes from the use of what we call a \emph{latching graph}
of a triconnected plane graph instead of 
the traditional tools for reasoning about separators in plane graphs: nooses or radial graphs, which are bipartite plane graphs 
representing the incidences of the vertices and faces (for a precise definition, see \cite{inkmann2008tree} for example).
The latching graph $L_G$ of a biconnected plane graph $G$ is a multigraph
obtained from $G$ by adding, in each face of $G$, every chord of the bounding cycle of the face. 
We observe that $L_G$ is simple if $G$ is triconnected (Proposition~\ref{prop:latching_simple}).
Separators of $G$ correspond to cycles of $L_G$ in a similar way as they
correspond to cycles of the radial graph of $G$. 
Unlike a radial graph, however, that is a bipartite graph on the vertices and the faces of the embedded graph $G$, a latching graph is a graph on $V(G)$ and simpler to reason with for some purposes.
More specifically, we define a class of biconnected plane graphs
we call \emph{steerings} (Definition~\ref{def:steering} in Section~\ref{sec:charact}) 
and show that a vertex set $X$ of a triconnected plane graph $G$ is a PMC if and only if the subgraph of $L_G$ induced by $X$ is a steering. 
This characterization would be less simple if it was formulated in terms of
radial graphs or nooses.
It is not clear if the latching multigraph is useful for reasoning with
plane graphs that are not triconnected.
Adding this notion of latching graphs to the toolbox for triconnected planar graphs is one of our contributions.

\section{Preliminaries} \label{sec:prelim}
\subparagraph{Graphs.}
In this paper, all graphs of main interest are simple, that is, without self loops or parallel edges. Let $G$ be a graph.
We denote by $V(G)$ the vertex set of $G$ and by $E(G)$ the edge set of $G$.
When the vertex set of $G$ is $V$, we say that the graph $G$ is \emph{on} $V$.
The \emph{complete graph} on $V$, denoted by $K(V)$, is a graph on $V$ where every vertex is adjacent to all other vertices.
A complete graph $K(V)$ is a $K_n$ if $|V| = n$.

A graph $H$ is a \emph{subgraph} of a graph $G$ if $V(H) \subseteq V(G)$ and $E(H) \subseteq E(G)$.
The subgraph of $G$ induced by $U \subseteq V(G)$ is denoted by $G[U]$: its  vertex set is $U$ and its edge set is $\{\{u, v\} \in E(G) \mid u, v \in U\}$. 
A vertex set $C \subseteq V(G)$ is a \emph{clique} of $G$ if $G[C]$ is a complete graph.
For each $v \in V(G)$, $N_G(v)$ denotes the set of neighbors of $v$ in $G$: $N_G(v) = \{u \in V(G) \mid \{u, v\} \in E(G)\}$.
The \emph{degree} of $v$ in $G$ is $|N_G(v)|$.
For $U \subseteq V(G)$, the {\em open neighborhood of $U$ in $G$}, denoted by $N_G(U)$, is the set of vertices adjacent to some vertex in $U$ but not belonging to $U$ itself: $N_G(U) = (\bigcup_{v \in U} N_G(v)) \setminus U$. 
For an edge set $E \subseteq E(G)$ we denote by $V(E)$ the set of vertices of $G$ incident to some edge in $E$.

We say that a vertex set $C \subseteq V(G)$ is {\em connected in} $G$ if, for every $u, v \in C$, there is a walk of $G[C]$ starting with $u$ and ending with $v$, where a \emph{walk} of $G$ is a sequence of vertices in which every vertex except for the last is adjacent to the next vertex in the sequence.
It is a {\em connected component} or simply a {\em component} of $G$ if it is connected and is inclusion-wise maximal subject to connectivity.
For each vertex set $S$ of $G$, we denote by $\calC_G(S)$ the set of components of $G[V(G) \setminus S]$.
When $G$ is clear from the context we may drop the subscript and write $\calC(S)$ for $\calC_G(S)$.
A vertex set $S \subseteq V(G)$ is a {\em separator} of $G$ if $|\calC_G(S)| \geq 2$.
We usually assume that $G$ is connected and therefore $\emptyset$ is not a separator of $G$.
A component $C \in \calC_G(S)$ is a \emph{full component} associated with $S$ if $N_G(C) = S$.
A graph $G$ is \emph{biconnected} (\emph{triconnected}) if every separator of $G$ is of cardinality two (three) or greater. 
A graph is a \emph{cycle} if it is connected and every vertex is adjacent to exactly two vertices. 
A graph is a \emph{tree} if it is connected and does not contain a cycle as a subgraph.
A tree is a \emph{path} if it has exactly two vertices of degree one or its vertex set is a singleton.
Those two vertices of degree one are called the \emph{ends} of the path.
If $p$ is a path with ends $a$ and $b$, we say that $p$ is \emph{between} $a$ and $b$.
When we speak of cycles or paths of $G$, we mean subgraphs of $G$ that are cycles or paths. 
Let $p$ be a cycle or path of $G$.
A \emph{chord} of $p$ in $G$ is an edge $\{u, v \}$ of $G$ with $u, v \in V(p)$ that is not an edge of $p$. 
A cycle or path of $G$ is \emph{chordless} if it does not have any chord in $G$.

\subparagraph{Tree decompositions.}
A {\em tree-decomposition} of $G$ is a pair $(T, \calX)$ where $T$ is a tree and $\calX$ is a family $\{X_i\}_{i \in V(T)}$ of vertex sets of $G$, indexed by the nodes of $T$, such that the following three conditions are satisfied. 
We call each $X_i$ the {\em bag} at node $i$.   
\begin{enumerate}
  \item $\bigcup_{i \in V(T)} X_i = V(G)$.
  \item For each edge $\{u, v\} \in E(G)$, there is some $i \in V(T)$ such that $u, v \in X_i$.
  \item For each $v \in V(G)$, the set of nodes $I_v = \{i \in V(T) \mid v \in
  X_i\} \subseteq V(T)$ is connected in $T$.
\end{enumerate}
The {\em width} of this tree-decomposition is $\max_{i \in V(T)} |X_i| - 1$.
The {\em treewidth} of $G$, denoted by $\tw(G)$ is the smallest $k$ such that there is a tree-decomposition of $G$ of width $k$.

\subparagraph{Minimal separators and potential maximal cliques.}
Let $G$ be a graph and $S$ a separator of $G$. 
For distinct vertices $a, b \in V(G)$, $S$ is an {\em $a$-$b$ separator} if there is no path between $a$ and $b$ in $G[V(G) \setminus S]$; it is a {\em minimal $a$-$b$ separator} if it is an $a$-$b$ separator and no proper subset of $S$ is an $a$-$b$ separator.
A separator is a {\em minimal separator} if it is a minimal $a$-$b$ separator for some $a, b \in V(G)$. 
A necessary and sufficient condition for a separator $S$ of $G$ to be minimal is that $\calC_G(S)$ has at least two members that are full components associated with $S$.
We say that a connected vertex set $C$ of $G$ is \emph{minimally separated} if $N_G(C)$ is a minimal separator.

Graph $H$ is {\em chordal} if every cycle of $H$ with four or more vertices has a chord in $H$.
$H$ is a {\em triangulation of graph $G$} if it is chordal, $V(G) = V(H)$, and $E(G) \subseteq E(H)$. 
A triangulation $H$ of $G$ is {\em minimal} if there is no triangulation $H'$ of $G$ such that $E(H')$ is a proper subset of $E(H)$.

A vertex set $X$ of $G$ is a \emph{potential maximal clique} (PMC for short) of $G$ if there is some minimal triangulation $H$ of $G$ such that $X$ is a maximal clique of $H$.
We denote by $\Pi(G)$ the set of all PMCs of $G$.
The following lemmas due to Bouchitt\'{e} and Todinca~\cite{bouchitte2002listing} are
essential in our reasoning about PMCs.
\begin{lemma}[Bouchitt\'{e} and Todinca~\cite{bouchitte2001treewidth}]\label{lem:BT-minsep-PMC}
  Let $X$ be a PMC of a graph $G$. Then, for every component $C$ of $G[V(G) \setminus X]$,
  $S = N_G[C]$ has a full component containing $X \setminus S$ and hence is a minimal separator.
  Moreover, for every minimal separator $S$ of $G$ such that 
  $S \subseteq X$, we have $S \neq X$ and there is some component $C$ of $G[V(G) \setminus X]$ such that
  $S = N_G(C)$.
\end{lemma}

\begin{lemma}[Bouchitt\'{e} and Todinca~\cite{bouchitte2001treewidth}]\label{lem:BT-lemma}
  A vertex set $X$ of graph $G$ is a PMC of $G$ if and only if all of the following conditions hold.
  \begin{enumerate}
  \item There is no full component associated with $X$.
  \item For every $u, v \in X$, either $\{u, v\} \in E(G)$ or there is some $C \in \calC_G(X)$ such that $u, v \in N_G(C)$.
  \end{enumerate}
\end{lemma}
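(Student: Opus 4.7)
The plan is to prove both directions using the standard correspondence between minimal triangulations of $G$ and minimal separators of $G$. I would rely on two classical facts about any minimal triangulation $H$ of $G$: (i) every minimal separator of $H$ is also a minimal separator of $G$, and (ii) for any minimal separator $S$ of $H$, the partitions $\CC_H(S)$ and $\CC_G(S)$ coincide. Both follow by short arguments from chordality of $H$ and the non-removability of fill edges in a minimal triangulation.

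For the forward direction, fix a minimal triangulation $H$ of $G$ with $X$ a maximal clique of $H$. For condition~2 I pick any $D \in \CC_H(X)$; its neighborhood $N_H(D) \subseteq X$ is a minimal separator of $H$, and by (i)--(ii) equals $N_G(C)$ for the corresponding $C \in \CC_G(X)$, so $N_G(C)$ is a minimal separator of $G$. For condition~3, given non-adjacent $u, v \in X$ in $G$, the edge $\{u,v\}$ is a fill edge, and the standard characterization of fill edges in a minimal triangulation (every fill edge is contained in some minimal separator of $H$) produces a $D \in \CC_H(X)$ with $u, v \in N_H(D)$; (ii) again converts $D$ into a witness $C \in \CC_G(X)$ with $u, v \in N_G(C)$. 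For condition~1, suppose for contradiction that some $C \in \CC_G(X)$ is a full component, so $X = N_G(C)$. Then $X$ itself would be a minimal separator of $G$, and by analyzing how $\CC_H(X)$ refines $\CC_G(X)$, one locates a vertex $v \in C$ whose $H$-neighborhood contains $X$; then $X \cup \{v\}$ is a clique of $H$, contradicting the maximality of $X$.

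For the backward direction, I would explicitly construct a witness $H$. First, make $X$ into a clique by adding all missing edges between its vertices. For each $C \in \CC_G(X)$, use condition~2 (which says $N_G(C)$ is a minimal separator of $G$) to build a minimal triangulation of $G[C \cup N_G(C)]$ in which $N_G(C)$ is a clique, and let $H$ be the union of $G$ with all edges added at these steps. Condition~3 ensures that every fill edge inside $X$ is justified by lying in some $N_G(C)$, so no such edge is removable from $H$; combined with the minimality of the recursive triangulations this yields that $H$ is a minimal triangulation of $G$. Finally, condition~1 guarantees that no vertex of $V(G) \setminus X$ is adjacent in $H$ to every vertex of $X$, so $X$ is a \emph{maximal} clique of $H$ and hence a PMC. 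The main obstacle is the ``fill edge is witnessed by a minimal separator'' ingredient used in condition~3 of the forward direction together with the refinement property (ii): getting these classical statements formulated exactly right and correctly matched to the components in $\CC_G(X)$ is where the proof requires genuine care; once they are in hand, both directions reduce to bookkeeping.
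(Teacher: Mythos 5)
First, note that the paper does not prove Lemma~\ref{lem:BT-lemma} at all: it is imported verbatim from Bouchitt\'e and Todinca~\cite{bouchitte2002listing}, so there is no in-paper argument to compare yours against, and your sketch has to stand on its own. Its overall shape (forward direction via properties of maximal cliques and minimal separators of a minimal triangulation $H$, backward direction via a block-by-block construction of a witness triangulation) is indeed the standard route, but as written it has two genuine gaps.

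First, your fact~(ii) is stated for minimal separators $S$ of $H$, yet in the forward direction you apply it to $X$ itself, which is a maximal clique of $H$, not a minimal separator. What conditions~2 and~3 actually require is that $\CC_H(X)=\CC_G(X)$ and that $N_H(D)=N_G(D)$ for each such component $D$; without these, your argument only yields that $N_H(D)$ is a minimal separator of $G$ and that $u,v\in N_H(D)$, which is not the same as the needed statements about $N_G(C)$ (a fill edge between a vertex of $X$ and a vertex of $D$ is not excluded a priori). These equalities are true, but they need their own argument via the Parra--Scheffler description of $H$ as $G$ with a maximal family of pairwise parallel minimal separators completed; they do not follow from (i)--(ii) as you state them. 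Likewise, ``every fill edge lies in a minimal separator of $H$'' does not by itself produce a $D\in\CC_H(X)$ with $u,v\in N_H(D)$: you must pick a full component of $H-S$ disjoint from $X\setminus S$ and pass to the component of $H-X$ containing it. Second, in the backward direction the chordality of the glued graph and, above all, its \emph{minimality} are asserted rather than proved: ``no fill edge inside $X$ is removable because it lies in some $N_G(C)$'' is not a consequence of condition~3 alone, and ``combined with the minimality of the recursive triangulations'' hides the main content of this direction (Bouchitt\'e and Todinca establish it with their block machinery; an alternative is the characterization of minimal triangulations by every fill edge being the unique chord of a four-cycle). The maximality of $X$ in $H$ is the one part that is fine as sketched, since every vertex of a component $C$ has its $H$-neighborhood inside $C\cup N_G(C)$ and condition~1 gives $N_G(C)\subsetneq X$.
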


They also show that PMCs are extremely useful for computing the treewidth.
\begin{theorem}
[Bouchitt\'{e} and Todinca \cite{bouchitte2001treewidth}]
\label{thm:bt-dp}
    Given a graph $G$ of $n$ vertices and $\Pi(G)$, the treewidth of $G$ can be computed in time $|\Pi(G)|n^{O(1)}$.
\end{theorem}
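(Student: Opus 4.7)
The plan is to use bottom-up dynamic programming indexed by pairs $(S, C)$, where $S$ is a minimal separator of $G$ and $C \in \CC_G(S)$ is a full component associated with $S$. For each such pair, define $f(S, C)$ to be the minimum width over all tree-decompositions of the graph $G_{S,C}$ obtained from $G[S \cup C]$ by adding every missing edge inside $S$ so that $S$ becomes a clique, subject to the constraint that some bag contains $S$. The treewidth of $G$ itself is then the minimum, over all $X \in \Pi(G)$, of $\max(|X|-1,\ \max_{C \in \CC_G(X)} f(N_G(C), C))$; Lemma~\ref{lem:BT-lemma} guarantees that every component $C \in \CC_G(X)$ is minimally separated, so $(N_G(C), C)$ is a valid subproblem.

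Analogously, the recurrence for $f(S, C)$ takes the minimum of $\max(|X|-1,\ \max_{C' \in \CC_G(X),\, C' \subsetneq C} f(N_G(C'), C'))$ over PMCs $X$ of $G$ with $S \subseteq X \subseteq S \cup C$ that are compatible with $(S, C)$ in the sense that $V(G) \setminus (S \cup C)$ forms exactly one component in $\CC_G(X)$. Evaluating the recurrence in order of increasing $|C|$ is straightforward once all strictly smaller blocks have been processed; the case where $X = S \cup C$ is itself a PMC serves as the base case, yielding $f(S, C) = |S \cup C| - 1$.

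For the running time, every minimal separator $S$ of $G$ is contained in at least one PMC (take any PMC supplied by a minimal triangulation extending the $(S, C)$-block), and each minimal separator has at most $n$ associated full components, so the total number of subproblems is $O(n \cdot |\Pi(G)|)$. Each subproblem is handled by iterating through the at most $|\Pi(G)|$ candidate PMCs and, for each, checking compatibility and combining the stored values in $n^{O(1)}$ time, yielding total running time $|\Pi(G)| n^{O(1)}$. The main obstacle I foresee is establishing correctness of the recurrence in both directions: given an optimal minimal triangulation $H$ one must extract a maximal clique $X$ of $H$ (which is then a PMC of $G$) whose removal splits $G_{S,C}$ into pieces that are themselves minimal triangulations of smaller blocks, and conversely one must glue recursively optimal triangulations along a PMC $X$ into a legitimate minimal triangulation of $G_{S,C}$ whose width matches the recurrence. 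Both directions rest on the correspondence between maximal cliques of minimal triangulations and the PMCs of $G$ contained in them, together with the structural conditions of Lemma~\ref{lem:BT-lemma}.
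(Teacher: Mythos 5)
The paper does not actually prove this statement: it is imported as a black box from Bouchitt\'e and Todinca, so the relevant comparison is with their dynamic programming over blocks, which your sketch otherwise follows (pairs $(S,C)$ with $S$ a minimal separator and $C$ a full component, realizations with $S$ completed into a clique, a recurrence over PMCs $X$ with $S\subseteq X\subseteq S\cup C$, and a top-level minimum over all PMCs). The concrete flaw is your ``compatibility'' condition requiring that $V(G)\setminus(S\cup C)$ form exactly one component of $\CC_G(X)$. Since every component of $G[V(G)\setminus S]$ other than $C$ has its neighborhood inside $S\subseteq X$, the members of $\CC_G(X)$ lying outside $C$ are exactly the members of $\CC_G(S)$ other than $C$, independently of $X$; so your condition is really a condition on the block alone, namely that $G[V(G)\setminus S]$ has exactly two components. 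Whenever a block's separator has three or more components, no PMC is ``compatible'', the minimum ranges over an empty set, and the recurrence breaks. For example, for $G=K_{1,3}$ with center $c$ and leaves $a,b,d$, the only PMC between $\{c\}$ and $\{c,b\}$ is $\{c,b\}$, yet $\{a,d\}$ splits into two components of $\CC_G(\{c,b\})$, so your $f(\{c\},\{b\})$ is $+\infty$ and the algorithm misses $\tw(G)=1$. The fix is simply to drop the condition: the Bouchitt\'e--Todinca recurrence minimizes over all PMCs $X$ of $G$ with $S\subseteq X\subseteq S\cup C$ and recurses on the blocks $(N_G(C'),C')$ for the components $C'$ of $G[C\setminus X]$, which Lemma~\ref{lem:BT-lemma} guarantees are again full blocks. (Your condition happens to be vacuous for triconnected planar graphs, where every minimal separator has exactly two components, but the theorem is stated, and used in the paper after a reduction step, for general graphs.)

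Beyond this, you explicitly defer the two-directional correctness of the recurrence --- extracting from an optimal minimal triangulation of a block a maximal clique that is a PMC splitting the block into smaller blocks, and conversely gluing recursively optimal block triangulations along a chosen PMC --- but that correspondence is precisely the substance of the theorem; as written, your text is an algorithm description plus an acknowledgement of the missing proof rather than a proof. A smaller point: your bound on the number of subproblems needs more than ``every minimal separator lies in some PMC'', since a priori one PMC could contain many minimal separators; you either need the Bouchitt\'e--Todinca fact that the minimal separators contained in a PMC $X$ are exactly the sets $N_G(C)$ for $C\in\CC_G(X)$ (hence at most $n$ per PMC), or you should restrict the table to blocks $(N_G(C),C)$ with $C\in\CC_G(X)$ for some $X\in\Pi(G)$, which are the only blocks the recursion ever visits; either repair keeps the total time at $|\Pi(G)|\,n^{O(1)}$.
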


We need the following property of PMCs.
\begin{lemma}
[Bouchitt\'{e} and Todinca~\cite{bouchitte2001treewidth}]
\label{lem:pmc_subset}
Let $X$ be a PMC of a graph $G$.
Then, every proper subset of $X$ has a full component associated with it. As a consequence, no proper subset of a PMC is a PMC.
\end{lemma}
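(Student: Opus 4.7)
The plan is to use the Bouchitté--Todinca characterization (Lemma~\ref{lem:BT-lemma}) as the only tool. Given a PMC $X$ of $G$ and a proper subset $Y \subsetneq X$, I want to exhibit a full component of $G$ associated with $Y$. The natural candidate is the component of $G \setminus Y$ containing some distinguished vertex $w \in X \setminus Y$.

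Concretely, I would fix any $w \in X \setminus Y$, let $C$ be the component of $G[V(G) \setminus Y]$ that contains $w$, and aim to show $N_G(C) = Y$. The inclusion $N_G(C) \subseteq Y$ is immediate from the definition of a component of $G \setminus Y$: any neighbor of $C$ outside $C$ must lie in $Y$, else it would lie in $C$ by maximal connectivity. The content is in the reverse inclusion $Y \subseteq N_G(C)$, and for this I would invoke condition~3 of Lemma~\ref{lem:BT-lemma} applied to the pair $y, w \in X$, for each $y \in Y$. In the easy case $\{y,w\} \in E(G)$, since $w \in C$ and $y \in Y$ is disjoint from $C$, we immediately get $y \in N_G(C)$. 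In the other case, there is a component $D \in \CC_G(X)$ with $y, w \in N_G(D)$; since $D$ is connected in $G[V(G)\setminus X]$ and $Y \subseteq X$, $D$ is also connected in $G[V(G)\setminus Y]$, and $D$ has a neighbor $w \in C$, so $D \subseteq C$. Because $y$ has a neighbor in $D \subseteq C$ and $y \notin C$, again $y \in N_G(C)$. Hence $C$ is a full component associated with $Y$.

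The ``as a consequence'' clause then follows immediately: if a proper subset $Y \subsetneq X$ were also a PMC, then by condition~1 of Lemma~\ref{lem:BT-lemma} applied to $Y$ there would be no full component associated with $Y$, contradicting what we just proved.

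I do not expect any serious obstacle here; the proof is really just a careful bookkeeping exercise in which of $X$ or $Y$ is being removed. The one place to be a little vigilant is the second case above: one must check that the component $D$ of $G \setminus X$ survives intact as a subset of a component of $G \setminus Y$ (using $Y \subseteq X$) and that $w$'s adjacency to $D$ forces $D$ into the same component $C$, so that $y \in N_G(D) \subseteq N_G(C) \cup C$ collapses to $y \in N_G(C)$ since $y \in Y$ is disjoint from $C$.
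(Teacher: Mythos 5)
Your proof is correct: fixing $w \in X \setminus Y$, the component of $G[V(G)\setminus Y]$ containing $w$ is indeed full for $Y$, because condition~3 of Lemma~\ref{lem:BT-lemma} puts every $y \in Y$ either adjacent to $w$ or adjacent to a component $D \in \CC_G(X)$ that (being disjoint from $Y \subseteq X$ and attached to $w$) is absorbed into $C$; condition~1 then gives the ``no proper subset is a PMC'' consequence. The paper itself states this lemma only as a citation to Bouchitt\'e and Todinca and supplies no proof, and your argument is essentially the standard one derived from the three-condition characterization, so there is nothing in the paper it diverges from.
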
 

\subparagraph{Plane graphs.}
A \emph{sphere-embedded graph} is a graph where a vertex is a point on the sphere $\sphere$ and each edge is a simple curve on $\sphere$ between two distinct vertices that does not contain any vertex except for its ends. 
In the combinatorial view of a sphere-embedded graph $G$, an edge represents the adjacency between two vertices at its ends. 
A sphere-embedded graph $G$ is a \emph{plane graph} if no two edges of $G$ intersect each other as curves except at their ends.
A graph $G$ is \emph{planar} if 
there is a plane graph that is isomorphic to $G$ in the combinatorial view.
We call this plane graph a \emph{planar embedding} of $G$.

Let $G$ be a biconnected plane graph. 
Then the removal of all the edges of $G$ from the sphere $\Sigma$ results in a collection of regions homeomorphic to open discs.
We call these regions the \emph{faces} of $G$.
Each face of $G$ is bounded by a simple closed curve that consists of edges of $G$.
A vertex or an edge of $G$ is \emph{incident} to a face $f$ of $G$, if it is contained in the closed curve
that bounds $f$. 
We denote by $F(G)$ the set of faces of $G$. 
For each $f \in F(G)$, we denote by $V(f)$ the set of vertices of $G$ incident to $f$.

A \emph{combinatorial} representation of a plane graph $G$ is an indexed set $\{\pi_v\}_{v \in V(G)}$, where $\pi_v$ is a permutation on $N_G(v)$ such that $\pi_v(u)$ for $u \in N_G(v)$ is the neighbor of $v$ that comes immediately after $u$ in the clockwise order around $v$. 
Suppose we have two planar embeddings $G_1$ and $G_2$ of a graph $G$ and assume that $V(G_1) = V(G_2) = V(G)$ and moreover the isomorphisms between $G_1$ and $G$ as well as between $G_2$ and $G$ are identities. 
Let, for $i \in \{1,2\}$, the combinatorial representation of $G_i$ be $\{\pi_{i, v}\}_{v \in V(G)}$. 
We say that these two planar embeddings are \emph{combinatorially equivalent} if either $\pi_{1, v} = \pi_{2, v}$ for every $v \in V(G)$ or $\pi_{1, v} = (\pi_{2, v})^{-1}$ for every $v \in V(G)$.
In this paper, we mainly deal with triconnected planar graphs.
It is known that, for a triconnected planar graph $G$, the planar embedding of $G$ is unique up to combinatorial equivalence 
\cite{Whitney1933}. 
For this reason, we will be dealing with triconnected plane graphs rather
than triconnected planar graphs.

In conventional approaches for branch-decompositions and tree-decompositions of
plane graphs \cite{seymour1994call,bouchitte2003chordal}, the standard tools are radial graphs and nooses.
We observe that, for triconnected plane graphs, latching graphs as defined below may replace those tools and greatly simplify definitions and reasoning.

\begin{definition}
    \label{def:latching}
    Let $G$ be a biconnected plane graph.
    The \emph{latching graph} of $G$, denoted by $L_G$, is a multigraph obtained from $G$ by, for each face $f$ bounded by a cycle of four or more vertices, drawing every chord of the cycle within $f$. 
\end{definition}
Figure~\ref{fig:latching} shows a triconnected plane graph with
its latching graph and a plane graph 
for which the latching graph is not simple.

\begin{proposition}
    \label{prop:latching_simple}
The latching graph of a plane graph $G$ is simple if $G$ is triconnected.
\end{proposition}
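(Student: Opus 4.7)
The plan is to show that the only way $L_G$ could fail to be simple is via parallel edges (chords connect distinct vertices, so there are no loops), and then to rule out the two possible sources of parallel edges between a pair $u, v \in V(G)$: either (i) $\{u, v\} \in E(G)$ while $\{u, v\}$ is also drawn as a chord of some face cycle, or (ii) $\{u, v\} \notin E(G)$ and $\{u, v\}$ is drawn as a chord of the bounding cycles of two distinct faces.

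Case (i) I would dispose of by appealing to the standard consequence of Whitney's theorem (already cited in the paper) that every facial cycle of a triconnected plane graph is induced in $G$: if $u, v$ lie on the bounding cycle of a face $f$ and $\{u, v\} \in E(G)$, then $\{u, v\}$ must itself be an edge of that cycle and hence is not a chord, so no duplicate is drawn inside $f$.

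Case (ii) is the main work. Suppose for contradiction that $\{u, v\} \notin E(G)$ lies on the boundaries of two distinct faces $f_1, f_2$. Inside the open disc $f_i$ I would draw a simple arc $\gamma_i$ from $u$ to $v$, and set $C = \gamma_1 \cup \gamma_2$. Then $C$ is a simple closed curve on $\sphere$ that meets $G$ only at $u$ and $v$, so by the Jordan curve theorem it separates $\sphere$ into two open regions $\Omega_1, \Omega_2$. Because $\{u, v\} \notin E(G)$, the two vertices split the bounding cycle of $f_1$ into two arcs each of which contains at least one interior vertex; a local inspection (a small neighborhood of such an interior vertex $w$ meets $f_1$ on one side of $\gamma_1$ only, and therefore lies entirely in one $\Omega_i$) shows that one such vertex lies in $\Omega_1$ and the other in $\Omega_2$. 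Since edges of $G$ never enter the open faces $f_1, f_2$, no edge of $G$ crosses $C \setminus \{u, v\}$, so $\{u, v\}$ is a separator of $G$ of size two, contradicting triconnectivity.

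The main obstacle is the topological bookkeeping in case (ii): making sure that both Jordan components $\Omega_1, \Omega_2$ genuinely contain vertices of $G \setminus \{u, v\}$, which in turn requires the observation that a facial cycle through $u$ and $v$ with $\{u, v\} \notin E(G)$ must have length at least four and hence have a vertex strictly between $u$ and $v$ on each of its two arcs. Everything else is a straightforward application of the Jordan curve theorem together with the fact that interior points of edges and vertices of $G$ avoid the open faces where $\gamma_1, \gamma_2$ are drawn.
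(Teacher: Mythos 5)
Your proof is correct and takes essentially the same route as the paper: the paper simply cites the observation from \cite{bouchitte2003chordal} that in a triconnected plane graph two adjacent vertices lie on exactly two common faces and two non-adjacent vertices on at most one, whereas you prove precisely this dichotomy yourself, via the Jordan-curve/two-separator argument for the non-adjacent case and the chordless-facial-cycle fact for the adjacent case. (One cosmetic nitpick: the fact that facial cycles of a triconnected plane graph are induced is more properly attributed to Tutte's characterization of faces of triconnected plane graphs than to Whitney's embedding-uniqueness theorem, and in any case it follows from your own case-(ii) argument by taking $\gamma_2$ to be the edge $\{u, v\}$ itself.)
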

\begin{proof}
    Let $G$ be a triconnected plane graph and let $u$ and $v$ be two vertices of $G$.
    It is straightforward to see, and is observed in \cite{bouchitte2003chordal},
that there are exactly two faces of $G$ to which both $u$ and
$v$ are incident if $\{u, v\}$ is an edge of $G$ and
there is at most one such face otherwise. In the first case,
there is no face in which $\{u, v\}$ is a chord and, in the second case,
there is at most one face in which $\{u, v\}$ is a chord.
Thus, there is at most one edge between $u$ and $v$ in $L_G$ 
and therefore $L_G$ is simple.
\end{proof}

\begin{figure}[hbtp]
\begin{center}
\includegraphics[width=12cm]{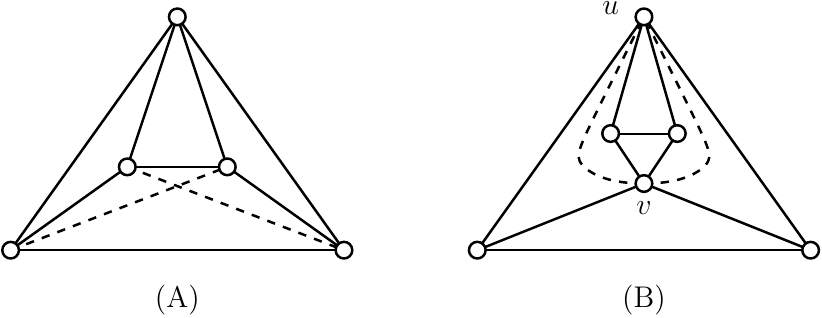}
\caption{(A) a triconnected plane graph (solid edges) and its latching graph (solid and broken edges).
(B) edge $\{u, v\}$ is drawn in more than one face if $u$ and $v$ separate the graph.}
\label{fig:latching}
\end{center}
\end{figure}

Although the latching graph $L_G$ of a triconnected plane graph $G$ is not a plane graph in general, we are interested in its subgraphs that are plane graphs. For $X \subseteq V(G)$, the \emph{subgraph of $L_G$ induced by $X$}, denoted by $L_G[X]$, is a sphere-embedded graph with vertex set $X$ that inherits all edges of $L_G$, as curves, with two ends in $X$.

\begin{proposition}
    \label{prop:plane-latch-subgraph}
    Let $G$ be a triconnected plane graph and let $X$ be a vertex set of $G$.
    Then, $L_G[X]$ is a plane graph if and only if there is no face $f$ of $G$ such that $|V(f) \cap X| \geq 4$.
\end{proposition}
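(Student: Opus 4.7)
The plan is to prove both directions of the equivalence separately, with the forward direction resting on the Jordan curve theorem and the backward direction on a direct construction of a plane drawing.

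For the forward direction, I argue the contrapositive. Suppose some face $f$ of $G$ has $|V(f) \cap X| \geq 4$, and pick four vertices $v_1, v_2, v_3, v_4$ of $V(f) \cap X$ appearing in this cyclic order along the bounding cycle of $f$. Since $v_2$ lies strictly between $v_1$ and $v_3$ on the cycle, $v_1$ and $v_3$ are not adjacent on the cycle, and symmetrically for $v_2$ and $v_4$. Hence, by the definition of $L_G$, both $\{v_1, v_3\}$ and $\{v_2, v_4\}$ are chords drawn inside the open disc $f$, and since all four endpoints are in $X$, both are edges of $L_G[X]$. The face $f$ is homeomorphic to an open disc with its bounding cycle mapped to the boundary; by the Jordan curve theorem, any simple curve inside $f$ from $v_1$ to $v_3$ separates $f$ into two open regions, and the two arcs of the boundary between $v_1$ and $v_3$ place $v_2$ and $v_4$ on different sides. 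Therefore every simple curve from $v_2$ to $v_4$ inside $f$ must cross it, so $L_G[X]$ contains two intersecting edges and is not a plane graph.

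For the backward direction, I assume $|V(f) \cap X| \leq 3$ for every face $f$ and construct a plane drawing of $L_G[X]$. Each edge of $L_G[X]$ is either an edge of $G$, drawn as in the embedding of $G$, or a chord drawn inside some face of $G$. Edges of $G$ do not cross each other since $G$ is plane. A chord drawn inside a face $f$ lies in the open disc $f$, disjoint from the interiors of all other faces and from every edge of $G$ (which lies on a face boundary), so the only potential crossings are between two chords drawn inside the same face. Within any face $f$, however, the chords of $L_G[X]$ have both endpoints in $V(f) \cap X$, a set of at most three vertices, so there are at most $\binom{3}{2} = 3$ such chords, forming a subgraph of $K_3$. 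These can be realized as pairwise non-crossing simple curves in the open disc $f$, for instance by a homeomorphism of $f$ onto a convex disc placing the three relevant vertices on the boundary and drawing straight-line segments. Combining these local drawings with the embedding of $G$ yields a plane drawing of $L_G[X]$.

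The main obstacle is the forward direction, where one must argue that a crossing is forced no matter how the chords are drawn inside $f$; this is where the Jordan curve theorem is essential, exploiting the fact that the four points $v_1, v_2, v_3, v_4$ interleave around the boundary of the open disc $f$. The backward direction is comparatively routine, because the face condition caps the chord subdrawing inside each face at a subgraph on three vertices, which trivially embeds in a disc without crossings.
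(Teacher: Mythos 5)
Your proof is correct and takes essentially the same approach as the paper: crossings in $L_G[X]$ can only occur between chords drawn inside a single face of $G$, four vertices of $X$ on one face boundary force two interleaved chords whose curves must cross (your Jordan-curve argument just spells out what the paper states tersely via the clique on $V(f)\cap X$), and at most three such vertices per face rules crossings out. The only cosmetic difference is that for the converse you re-draw the at most three chords per face without crossings, whereas the paper argues contrapositively that any crossing of the inherited curves already yields a face with $|V(f)\cap X|\geq 4$; under the (unspecified but canonical) way chords are drawn within faces in $L_G$, these amount to the same argument.
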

\begin{proof}
Since $V(f) \cap X$ forms a clique of $L_G[X]$ for each face $f$ of $G$, $L_G[X]$ is not a plane graph if 
there is some face $f$ such that $|V(f) \cap X| \geq 4$. Conversely, if $L_G[X]$ has an edge-crossing,
it must be in some face $f$ of $G$ and we have $|V(f) \cap X| \geq 4$.
\end{proof}

Let $X \subseteq V(G)$ such that $L_G[X]$ is a biconnected plane graph.
We call each face of $L_G[X]$ a \emph{region} of $L_G[X]$, in order to avoid confusions with the faces of $G$.
We say that a region $r$ of $L_G[X]$ is \emph{empty} if $r$ does not contain any vertex of $G$. 

\begin{proposition}
  \label{prop:compo_in_region}  
  Let $G$ be a triconnected plane graph and let $X \subseteq V(G)$ be such that $L_G[X]$ is a biconnected plane graph.  Then, each $C \in \calC_G(X)$ is contained in some region of $L_G[X]$.
  Moreover, each region of $L_G[X]$ contains at most one component in $\calC_G(X)$ and, if
  the region is incident to four or more vertices, exactly one component.
\end{proposition}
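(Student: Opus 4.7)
For Part 1, my plan is to show that no $G$-edge crosses any $L_G[X]$-edge in the embedding. Edges of $L_G[X]$ are of two kinds: $G$-edges between two $X$-vertices, which lie on the $1$-skeleton of the plane graph $G$ and so are not crossed by any other $G$-edge; and chord-edges drawn in the interior of some face of $G$, where no $G$-edge passes. Consequently, for each $C \in \CC_G(X)$, the planar drawing of $C$ (its vertices together with the $G$-edges among them) is a connected subset of $\sphere$ that is disjoint from the drawing of $L_G[X]$, and therefore lies in a single connected component of $\sphere \setminus L_G[X]$, i.e., a single region of $L_G[X]$.

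For Part 2, I would argue by contradiction. Suppose distinct components $C_1, C_2 \in \CC_G(X)$ both lie in a region $r$ bounded by the cycle $\gamma$. The same no-crossing principle implies that every $G$-edge incident to a vertex in $r$ stays in $\bar{r}$: its other endpoint lies in $\bar{r}$, and if this endpoint is in $X$, it must be on $\gamma$. Hence $N_G(C_i) \subseteq V(\gamma)$ for $i = 1, 2$, and no vertex of $X \setminus V(\gamma)$ has a $G$-neighbor in $C_1 \cup C_2$. In particular, every $G$-path from $C_1$ to $C_2$ passes through $V(\gamma)$.

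Next, using the planarity of $G$ restricted to the disc $\bar{r}$, I would argue that the subsets $N_G(C_1) \cap V(\gamma)$ and $N_G(C_2) \cap V(\gamma)$ cannot be cyclically interleaved along $\gamma$: an interleaving would force a connecting path inside $C_1$ to cross one inside $C_2$, which is impossible since these are disjoint connected subsets of the open disc $r$. Hence there exist two $\gamma$-vertices $p, q$ such that one arc of $\gamma - \{p, q\}$ contains $N_G(C_1) \cap V(\gamma)$ and the other contains $N_G(C_2) \cap V(\gamma)$. I would then show that $\{p, q\}$ is a $2$-separator of $G$ that separates $C_1$ from $C_2$, contradicting triconnectivity. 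The main obstacle is ruling out $C_1$-$C_2$ paths in $G - \{p, q\}$ that leave $\bar{r}$ through other regions of $L_G[X]$: such a path must re-enter $\bar{r}$ through some $\gamma$-vertex, and combining the non-interleaving property with the fact that $N_G(C_i) \subseteq V(\gamma)$ should force the re-entry to lie on the ``wrong'' arc of $\gamma$, preventing the path from ever reaching $C_2$. Handling this last step rigorously — essentially, tracking how excursions into other regions of $L_G[X]$ interact with $\gamma$ — is where the delicate planar-topological reasoning is concentrated.
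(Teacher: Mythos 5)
Your Part~1 is correct and is essentially the paper's own argument (no edge of $G$ crosses an edge of $L_G$, so each component lies in one region). The problem is Part~2. The step you yourself flag as ``delicate'' is in fact the entire difficulty, and the way you propose to close it does not work. To contradict triconnectivity you must show that \emph{every} $C_1$--$C_2$ path of $G$ meets $\{p,q\}$; but triconnectivity says precisely that $G-\{p,q\}$ is connected, so this separation claim is exactly as strong as the proposition itself and cannot follow from the only facts you have used, namely that $N_G(C_1)$ and $N_G(C_2)$ lie on $V(\gamma)$ and do not interleave. Concretely, nothing you have established rules out: (i) a third component of $\CC_G(X)$, lying in $r$ itself (your argument tacitly assumes $C_1,C_2$ are the only components in $r$) or in another region of $L_G[X]$, attached to interior vertices of both arcs, which gives a path $C_1\to a\to C_3\to b\to C_2$ avoiding $p$ and $q$; or (ii) a single $G$-edge joining an interior vertex of one arc to an interior vertex of the other -- such an edge is an edge of $L_G[X]$, but it may be drawn in a different region, so it is perfectly compatible with everything you have used. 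Planarity of the drawing inside the closed disc $\bar{r}$ gives no control whatsoever over how the graph outside $\bar{r}$ (or other material inside $r$) connects the two arcs, so the hope that ``the re-entry is forced to lie on the wrong arc'' has no basis. (There is also a smaller issue in the non-interleaving step when the two attachment sets share vertices, e.g.\ when $N_G(C_1)\cap V(\gamma)=N_G(C_2)\cap V(\gamma)$ has two elements, which needs separate, though easy, treatment.)

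The ingredient your sketch never exploits -- and without which no variant of this plan can succeed -- is that $r$ is a \emph{face} of $L_G[X]$: no edge of $L_G$ with both ends in $X$ can pass through $r$, hence a face of $G$ that meets $r$ and is incident to a vertex $x\in V(\gamma)$ cannot be incident to any other vertex of $X$, since the corresponding edge or chord of $L_G[X]$ would cut through $r$. This is what the paper's proof uses, and it localizes the contradiction inside $r$ instead of appealing to a global $2$-separator: take a boundary vertex $x$ adjacent to vertices of two distinct components contained in $r$, order the $G$-edges from $x$ into $r$ around $x$, and pick two consecutive ones whose other ends $v_i,v_{i+1}$ lie in different components; the face of $G$ between them is incident to $x,v_i,v_{i+1}$ and to no other vertex of $X$, so walking around its boundary gives a path of $G[V(G)\setminus X]$ from $v_i$ to $v_{i+1}$, contradicting that they lie in different components. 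If you want to salvage your route, you would have to inject this chordlessness-of-$r$ property into the argument; as written, the claim that $\{p,q\}$ separates $C_1$ from $C_2$ in $G$ is an unproved (and unprovable, from what you use) step, so the proof has a genuine gap.
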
 
\begin{proof}
Since no edge of $G$ crosses any edge of $L_G$, each edge of $G[V(G) \setminus X]$ has
its ends in the same region of $L_G[X]$. Therefore, for each $C \in \calC_G(X)$,
all the vertices of $C$ lie in the same region of $L_G[X]$. 
For the second statement, suppose a region $r$ of $L_G[X]$ contains more than
one component in $\calC_G(X)$. 
Then, there is some $x \in X$ on the boundary of
$r$ that is adjacent to vertices from more than one component in $\calC_G(X)$ contained in $r$.
Let $e_1$, $e_2$, \ldots, $e_m$ be the edges incident to $x$ lying in $r$, arranged in the
clockwise ordering, and let $v_i$, $i \in \{1, \ldots, m\}$,
be the other end of edge $e_i$. Since $r$ is a region of $L_G[X]$, none of $v_i$, 
$i \in \{1, \ldots, m\}$, belongs to $X$. Therefore, for some $i$, $1 \leq i < m$,
$v_i$ and $v_{i + 1}$ belong to distinct components in $\calC_G(X)$. 
But the face of $G$ to which $x$, $v_i$, and $v_{i + 1}$ are incident is not incident
to a vertex in $X$ other than $x$, since, if it did, the cycle bounding $r$ would have a chord
in $L_G[X]$ that separates $r$, a contradiction. Therefore, there is a path of $G[V(G) \setminus X]$
between $v_i$ and $v_{i+1}$ around this face, also a contradiction.
We conclude that $r$ contains at most one component in $\calC_G[X]$.

Finally, suppose a region $r$ of $L_G[X]$ is empty. Then $V(r) \subseteq V(f)$ for some face $f$ of $G$.
We have $|V(r)| = 3$, since otherwise $L_G[X]$ would not be a plane graph due to Proposition~\ref{prop:plane-latch-subgraph}.
Therefore, every region $r$ with $|V(r)| \geq 4$ is non-empty and hence contains exactly one component.
\end{proof}

We use this framework to formulate the fundamental observation on minimal separators of triconnected plane graphs, which is conventionally stated in terms of nooses~\cite{bouchitte2003chordal}.
\begin{proposition}
\label{prop:min-sep-of-triconn}
Let $G$ be a triconnected plane graph.
A vertex set $S$ of $G$ is a minimal separator of $G$ if and only if $L_G[S]$ is a cycle and 
neither of the two regions of $L_G[S]$ is empty.
\end{proposition}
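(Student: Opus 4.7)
The proof naturally splits into two directions, and both rest on the following simple observation: the image of a cycle in $L_G$ is a Jordan curve on the sphere $\sphere$ whose interior intersects neither the vertices nor the edges of $G$. This is because each cycle-edge is either an edge of $G$ (which carries no vertex of $G$ in its interior) or a chord drawn inside a face of $G$ (which, lying in an open face, crosses no vertex or edge of $G$).

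For the $(\Leftarrow)$ direction, assume $L_G[S]$ is a cycle whose two regions $r_1$ and $r_2$ are both nonempty. Since a cycle graph contains no $K_4$ as a subgraph, no face of $G$ can have four vertices in $S$, so by Proposition~\ref{prop:plane-latch-subgraph} $L_G[S]$ is a genuine plane graph. Because no edge of $G$ crosses any edge of $L_G$, any edge of $G$ with both endpoints in $V(G)\setminus S$ lies in a single region of $L_G[S]$; hence every component in $\CC_G(S)$ lies in one region, and by Proposition~\ref{prop:compo_in_region} each region contains at most one such component. Both regions being nonempty therefore yields exactly two components $C_1\subseteq r_1$ and $C_2\subseteq r_2$, so $S$ is a separator. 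For minimality I would show both $C_i$ are full components. Fix $v\in S$; the two cycle-edges at $v$ partition the cyclic ordering of $L_G$-edges around $v$ into two arcs, one per region of the Jordan curve, and each $G$-neighbor of $v$ in $V(G)\setminus S$ lies in the region matching its arc. I would then argue that each arc contains at least one $G$-edge at $v$: if an arc contained only chords at $v$, those chords would be confined to a single face of $G$ incident to $v$, and a short case analysis on $|S|$ would exhibit a cut of size at most two in $G$, contradicting triconnectivity.

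For the $(\Rightarrow)$ direction, let $S$ be a minimal separator of $G$ with full components $C_1,C_2$. I would construct the cycle by tracing the boundary, in $\sphere$, of the open region enclosing $C_1$ once we remove the vertices and edges incident to $C_2$. This boundary traversal visits vertices of $S$ in some cyclic order $v_1,v_2,\ldots,v_k$; triconnectivity together with the full-component property of $C_1$ would ensure that each $v\in S$ is visited exactly once, and that any two consecutive $v_i,v_{i+1}$ lie on a common face of $G$ on the $C_1$-side, so $\{v_i,v_{i+1}\}$ is an edge of $L_G$ (either a $G$-edge of that face, or a chord inside it). This yields a cycle $v_1v_2\cdots v_kv_1$ in $L_G[S]$ whose two regions contain $C_1$ and $C_2$ respectively. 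To show $L_G[S]$ is exactly this cycle, I would assume for contradiction that $\{v_i,v_j\}$ with $|i-j|\not\equiv 1\pmod k$ is an extra edge of $L_G[S]$, and use the shorter closed curve $v_iv_{i+1}\cdots v_jv_i$ (via this new edge) as a Jordan curve exhibiting a proper subset of $S$ that still separates $C_1$ from $C_2$, contradicting minimality of $S$.

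The main obstacle lies in the $(\Rightarrow)$ direction: verifying that the boundary traversal around $C_1$ visits each $v\in S$ exactly once is where triconnectivity is truly essential, since in a merely biconnected plane graph a single vertex of $S$ could appear on the boundary of $C_1$'s region multiple times. The analogous subtlety in $(\Leftarrow)$ is the triconnectivity-based argument ruling out an \emph{all-chords} arc at some $v\in S$; a clean formulation of this step, avoiding unnecessary case analysis, is what I would spend most effort polishing.
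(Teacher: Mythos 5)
In the $(\Leftarrow)$ direction your plan is structurally the same as the paper's, but the contradiction you aim for in the fullness step is the wrong one and that step would fail as stated. Suppose $v\in S$ has no $G$-edge into the region $r_1$; then, as you observe, the angular sector at $v$ inside $r_1$ is contained in a single face $f$ of $G$, and this face is incident to both cycle-neighbours $s_1,s_2$ of $v$ on $L_G[S]$. The correct conclusion is that $\{s_1,s_2\}$ is then an edge of $L_G$ (an edge of $G$ or a chord of $f$), i.e.\ a chord of $L_G[S]$ when $|S|\geq 4$, contradicting the \emph{hypothesis that $L_G[S]$ is a cycle} --- not triconnectivity. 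Your claim that ``a short case analysis on $|S|$ would exhibit a cut of size at most two'' is false for $|S|\geq 4$: all you get from $v\notin N_G(C_1)$ is that $N_G(C_1)\subseteq S\setminus\{v\}$, a separator of size $|S|-1\geq 3$, which is perfectly compatible with triconnectivity. Triconnectivity is only needed in the case $|S|=3$, where $S\setminus\{v\}$ would indeed be a $2$-cut (this is exactly how the paper splits the argument).

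In the $(\Rightarrow)$ direction the central claim is asserted rather than proved, and the construction as described does not deliver it. The boundary of the region obtained by deleting the vertices and edges incident to $C_2$ passes, in general, through vertices of $C_2$ and along edge curves, not only through $S$; moreover a single face of $G$ can be incident to vertices of \emph{both} full components (take the cube with $S$ the three neighbours of one vertex), so ``a common face of $G$ on the $C_1$-side'' is not well defined, and the assertion that each $v\in S$ is visited exactly once --- which you yourself identify as the crux --- is precisely where the content of the proposition sits; attributing it to ``triconnectivity together with the full-component property'' without an argument leaves the proof incomplete. Note that the paper avoids needing simplicity of any boundary walk: it takes \emph{some} cycle $\gamma$ in $L_G[S]$ separating $C_1$ from $C_2$, uses minimality of $S$ to force $V(\gamma)=S$, and then uses fullness ($N_G(C_i)=S$, so every vertex of $S$ lies on the boundary of both regions containing $C_1$ and $C_2$) to conclude that the bounding cycles of these two regions coincide with all of $L_G[S]$, which kills extra edges without your chord-elimination step. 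Your minimality argument for removing chords is fine as an alternative ending, but it rests on the unproved simple-cycle claim.
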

\begin{proof}
Let $S$ be a minimal separator of $G$ and let $C_1, C_2 \in \calC_G(S)$ be distinct such that
$N_G(C_i) = S$ for $i \in \{1, 2\}$. To separate $C_1$ from $C_2$, $L_G[S]$ must contain a cycle
$\gamma$ that separates this components. Since $S$ is a minimal separator, we indeed have
$V(\gamma) = S$ and $L_G[S]$ is a biconnected plane graph.
Due to Proposition~\ref{prop:compo_in_region},
there are regions $r_1$ and $r_2$ of $L_G[S]$ such that $C_i$ is the unique component in 
$\calC_G(X)$ that is contained in $r_i$, for $i \in \{1, 2\}$.
Since $S$ is a minimal separator separating $C_1$ from $C_2$, every vertex in $S$ is incident
to both regions $r_1$ and $r_2$ and therefore the cycles of $L_G[S]$ bounding these two regions
are identical to each other and to $L_G[S]$. Therefore, $L_G[S]$ is a cycle.

For the converse, suppose $L_G[S]$ is a cycle and neither of the two regions $r_1$ and $r_2$ 
of $L_G[S]$ is empty. Let $C_i$ be the set of vertices of $G$ contained in $r_i$ for $i \in \{1, 2\}$.
Due to Proposition~\ref{prop:compo_in_region},
both $C_1$ and $C_2$ belongs to $\calC_G(S)$.
We claim that $N_G(C_1) = S$. This is trivial if $|S| = 3$ 
since $G$ is triconnected. So suppose $|S| \geq 4$. Let $s \in S$ be arbitrary and
let $s_1$ and $s_2$ be the vertices on the cycle $L_G[S]$ adjacent to $s$.
If there is no edge of $G$ between $s$ and vertices in $C_1$ then $s_1$ and
$s_2$ belong to the same face of $G$ and therefore there is an edge of $L_G$ between
$s_1$ and $s_2$. Since $s_1, s_2 \in S$ this edge belongs to $L_G[S]$ contradicting
the assumption that $L_G[S]$ is a cycle. This shows that $N_G(C_1) = S$.
As we similarly have $N_G(C_2) = S$, $S$ is a minimal separator of $G$.
\end{proof}
\begin{remark}
It follows from this proposition that every minimal separator of a triconnected planar graph has exactly two full components associated with it.
\end{remark}

\begin{corollary}
\label{cor:minsep4}
   Let $G$ be a triconnected plane graph.
A vertex set $S$ of $G$ with $|S| \geq 4$ is a minimal separator of $G$ if and only if
$L_G[S]$ is a cycle. 
\end{corollary}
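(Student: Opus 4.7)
The plan is to derive this corollary directly from Proposition~\ref{prop:min-sep-of-triconn}. The ``only if'' direction is immediate, since that proposition already tells us that $L_G[S]$ is a cycle whenever $S$ is a minimal separator, no matter what $|S|$ is. For the ``if'' direction, I assume $|S| \geq 4$ and that $L_G[S]$ is a cycle; I must verify the remaining hypothesis of Proposition~\ref{prop:min-sep-of-triconn}, namely that neither of the two regions of $L_G[S]$ is empty. I would argue by contradiction, supposing that some region $r$ of $L_G[S]$ is empty.

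The intermediate goal is to show that under this assumption, $S \subseteq V(f)$ for some single face $f$ of $G$. By hypothesis the interior of $r$ contains no vertex of $G$, so the next step is to argue that no edge of $G$ enters the interior of $r$ either. The key geometric fact is that in the sphere embedding, an edge of $G$ never crosses any edge of $L_G$: two edges of $G$ do not cross since $G$ is a plane graph, while every chord added in $L_G$ lies in the open interior of a single face of $G$, disjoint from the 1-skeleton of $G$. Consequently no edge of $G$ can cross the bounding cycle $C_r$ of $r$. A short case analysis on how many endpoints of a given edge of $G$ lie in $S$---in the two-endpoint case one observes that such an edge belongs to $L_G[S]=C_r$ and therefore lies on $C_r$ itself, not in the interior of $r$---then shows that every edge of $G$ stays outside the interior of $r$. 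Hence the interior of $r$ is disjoint from $G$ in $\Sigma$; being connected, it lies inside a single face $f$ of $G$, and therefore $\bar{r}\subseteq \bar{f}$, yielding $S\subseteq V(f)$.

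From $S\subseteq V(f)$ together with $|S|\geq 4$, the contradiction follows quickly via Definition~\ref{def:latching}. Since $|V(f)|\geq|S|\geq 4$, the bounding cycle of $f$ has four or more vertices, so every chord of this cycle appears in $L_G$. Any two vertices of $S\subseteq V(f)$ are therefore adjacent in $L_G$, either by an edge of $G$ on the boundary of $f$ or by such a chord. Consequently $L_G[S]$ equals the complete graph $K_{|S|}$, which has $\binom{|S|}{2}>|S|$ edges for $|S|\geq 4$, contradicting the hypothesis that $L_G[S]$ is a cycle.

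The main obstacle I anticipate is the geometric step of confining the interior of $r$ to a single face of $G$; this rests on the observation that edges of $G$ cannot cross edges of $L_G$ in the embedding, which is intuitively clear from the construction of $L_G$ but deserves an explicit appeal. Once that containment is in hand, the remainder is pure bookkeeping on edge counts, and the whole argument hinges on the fact that the size threshold $|S|\geq 4$ is exactly what is needed to trigger the chord-drawing clause of Definition~\ref{def:latching}.
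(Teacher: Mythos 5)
Your proposal is correct and follows essentially the same route as the paper: the ``only if'' direction is read off Proposition~\ref{prop:min-sep-of-triconn}, and the ``if'' direction shows that an empty region of $L_G[S]$ would be contained in a single face $f$ of $G$, whence Definition~\ref{def:latching} (using $|S|\geq 4$) forces $L_G[S]$ to be complete rather than a cycle. You merely spell out in more geometric detail the containment step that the paper states without elaboration.
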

\begin{proof}
    Suppose $|S| \geq 4$ and $L_G[S]$ is a cycle. 
    Then, neither of the two regions of $L_G[S]$ is empty, since if one of the two regions $r$ is empty, $r$ is contained in a face $f$ of $G$ and the definition of $L_G[S]$ implies that $L_G[S]$ is a complete graph, not a cycle.
    Therefore, the ``if'' part of Proposition~\ref{prop:min-sep-of-triconn} applies and 
    $S$ is a minimal separator of $G$.
    The other direction is an immediate consequence of the ``only if'' part of Proposition~\ref{prop:min-sep-of-triconn}.
\end{proof}

The following fact is essential in our treatment of PMCs of triconnected plane graphs.
\begin{proposition}
\label{prop:pmc-planar}
    For every PMC $X$ of a triconnected plane graph $G$, $L_G[X]$ is a biconnected plane graph.
    Moreover, every region of $L_G[X]$ is bounded by a chordless cycle of $L_G$.
\end{proposition}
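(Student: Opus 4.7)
My approach is to establish the three parts of the statement -- planarity of $L_G[X]$, its biconnectivity, and the chordless-cycle boundary property -- in sequence, each time combining Lemma~\ref{lem:BT-lemma} with Propositions~\ref{prop:min-sep-of-triconn} and~\ref{prop:compo_in_region}.

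For planarity, by Proposition~\ref{prop:plane-latch-subgraph} it suffices to rule out a face $f$ of $G$ with four vertices $v_1, v_2, v_3, v_4 \in V(f) \cap X$ in cyclic order on $\partial f$. The counting from the proof of Proposition~\ref{prop:latching_simple} forces $\{v_1, v_3\}, \{v_2, v_4\} \notin E(G)$: each such diagonal pair is non-consecutive on $\partial f$, so adjacency in $G$ would make the pair incident to at least three distinct faces. Condition~3 of Lemma~\ref{lem:BT-lemma} then supplies components $C_{13}, C_{24} \in \CC_G(X)$ with $v_1, v_3 \in N_G(C_{13})$ and $v_2, v_4 \in N_G(C_{24})$. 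On the closed disk $\sphere \setminus f$, whose boundary $\partial f$ carries $v_1, v_2, v_3, v_4$ in cyclic order, a Jordan-curve argument applied to $G$-paths $P_{13}$ and $P_{24}$ through $C_{13}$ and $C_{24}$ joining $v_1$ to $v_3$ and $v_2$ to $v_4$ respectively forces them to share a common internal vertex, hence $C_{13} = C_{24} =: C$. Then $\{v_1, v_2, v_3, v_4\} \subseteq S_C := N_G(C)$, and all six edges among these four vertices exist in $L_G$ (either on $\partial f$ or as chords drawn in $f$), making $L_G[\{v_1, v_2, v_3, v_4\}]$ a $K_4$ and contradicting that $L_G[S_C]$ is a cycle, by condition~2 of Lemma~\ref{lem:BT-lemma} and Proposition~\ref{prop:min-sep-of-triconn}.

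For biconnectivity, any two $u, v \in X$ with $\{u, v\} \notin E(G)$ lie jointly in $S_C$ for some $C \in \CC_G(X)$ by condition~3 of Lemma~\ref{lem:BT-lemma}, and then $L_G[S_C]$ is a cycle in $L_G[X]$ through both $u$ and $v$ by Proposition~\ref{prop:min-sep-of-triconn}. The two internally disjoint arcs of this cycle, together with the direct edge in the alternative case, deliver both connectedness and the absence of cut vertices. For region boundaries, fix a region $r$ of $L_G[X]$. If $r$ contains some $C \in \CC_G(X)$, let $\gamma_C := L_G[S_C]$, a cycle by Proposition~\ref{prop:min-sep-of-triconn}, and let $D_C$ be the open disk bounded by $\gamma_C$ that contains $C$. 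Because $N_G(C) = S_C$, the set $C$ is itself a component of $\CC_G(S_C)$; any $x \in X$ in the interior of $D_C$ would then lie in a component of $\CC_G(S_C)$ distinct from $C$ (since $x \notin C$), so $D_C$ would host two components of $\CC_G(S_C)$, contradicting Proposition~\ref{prop:compo_in_region} applied to $L_G[S_C]$. Hence $D_C$ contains no vertex or edge of $L_G[X]$ in its interior, giving $r = D_C$ and $\partial r = \gamma_C$, a chordless cycle of $L_G$. If instead $r$ is empty, the same kind of exclusion prevents any edge of $L_G$ from traversing its interior, so $r$ sits inside a single face $f$ of $G$; planarity of $L_G[X]$ forces $|V(f) \cap X| \leq 3$, and since $\partial r$ is a cycle on vertices of $V(f) \cap X$ we must have $|V(f) \cap X| = 3$ with $\partial r$ equal to the triangle on those three vertices, vacuously chordless.

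The main obstacle is the non-empty region case: one needs to show that the boundary of $r$ coincides with the full minimal-separator cycle $\gamma_C$ rather than some refinement carved out by additional $X$-vertices on the $C$-side. This is precisely where triconnectedness enters decisively, via the remark following Proposition~\ref{prop:min-sep-of-triconn} that every minimal separator of a triconnected planar graph has exactly two full components, which confines $X \setminus S_C$ to the opposite side of $\gamma_C$.
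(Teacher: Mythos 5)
Your proof is correct and follows essentially the same route as the paper: the Jordan-curve argument inside a face of $G$ containing four vertices of $X$ for planarity, the minimal-separator cycles supplied by condition~3 of Lemma~\ref{lem:BT-lemma} together with Proposition~\ref{prop:min-sep-of-triconn} for biconnectivity, and Proposition~\ref{prop:compo_in_region} plus the induced-cycle property for the region boundaries. If anything, your nonempty-region case spells out more explicitly than the paper why $L_G[S_C]$ itself bounds the region (the operative fact being Proposition~\ref{prop:compo_in_region} applied to $S_C$, rather than the two-full-components remark your closing paragraph credits).
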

\begin{proof}
Let $X$ be a PMC of $G$.
We first show that $L_G[X]$ is biconnected. 
Consider an arbitrary vertex $x \in X$ in the PMC.
We show that $L_G[X \setminus \{x\}]$ is connected.
Let $y$ and $z$ be any two members of $X \setminus \{x\}$.
Due to the second condition for PMCs in Lemma~\ref{lem:BT-lemma}, either $y$ is adjacent to $z$ in $G$ or there is some $C \in \calC_G(X)$ such that $y, z \in N_G(C)$.
In the former case, $y$ is adjacent to $z$ in $L_G[X]$. 
In the latter case, $S = N_G(C)$ is a minimal separator of $G$ and, therefore, $L_G[S]$ is a cycle due to Proposition~\ref{prop:min-sep-of-triconn}.
It follows that there is a subpath of the cycle $L_G[S]$ between $y$ and $z$ that does not contain $x$.
Therefore, $L_G[X \setminus \{x\}]$ is connected for every $x \in X$ and hence $L_G[X]$ is biconnected.

To show that $L_G[X]$ is a plane graph, suppose that it is not the case. 
Due to Proposition~\ref{prop:plane-latch-subgraph},  there is a face $f$ of $G$ such that $|V(f) \cap X| \geq 4$.
    Let $x_1$, $x_2$, $x_3$, and $x_4$ be arbitrary four vertices in $V(f)$, listed in the order around $f$. 
    If $x_1$ is not adjacent to $x_3$ in $G$, there is some $C \in \calC_G(X)$ such that $x_1, x_3 \in N_G(C)$. 
    In either case, there is a path $p_1$ of $G$ between $x_1$ and $x_3$ that intersects $X$ only at $x_1$ and $x_3$.
    Similarly, there is a path $p_2$ of $G$ between $x_2$ and $x_4$ that intersects $X$ only at $x_2$ and $x_4$.
    Since the Jordan curve obtained from $p_1$ by adding a curve in $f$ separates $x_2$ and $x_4$, $p_1$ and $p_2$ must intersect at a vertex of $G$.
    Therefore, there is some component $C \in \calC_G(X)$ such that $x_i \in N_G(C)$ for every $i \in \{1, 2, 3, 4\}$.
    But $S = N_G(C)$ cannot be a minimal separator, since $|S \cap V(f)| \geq 4$ and therefore $L_G[S]$ is not a cycle. 
    This contradicts the assumption that $X$ is a PMC, because of Lemma~\ref{lem:BT-minsep-PMC}. 
    Therefore, $L_G[X]$ is a plane graph.

    Let $r$ be an arbitrary region of $L_G[X]$. 
    If $r$ is non-empty and $C$ is the set of vertices lying in $r$, then $C \in \calC_G(X)$ 
    due to Proposition~\ref{prop:compo_in_region} and hence $S = N_G(C)$ is a minimal separator of $G$ 
    due to Lemma~\ref{lem:BT-minsep-PMC} since $X$ is a PMC of $G$ .
    Due to Proposition~\ref{prop:min-sep-of-triconn}, $L_G[S]$ is a cycle and bounds $r$.
    Suppose, on the other hand, that $r$ is empty. Then the cycle bounding $r$ has exactly three vertices 
    because otherwise $G$ has a face $f$ such that $|V(f) \cap X| \geq 4$ and $L_G[X]$ would not be a plane graph
    due to Proposition~\ref{prop:plane-latch-subgraph}. This cycle, having exactly three vertices, is trivially chordless.
\end{proof}

\section{Characterizing PMCs in triconnected planar graphs}
\label{sec:charact}

In this section, we characterize PMCs of a triconnected plane graph
in terms of graphs we call steerings.

\begin{definition}
\label{def:steering}
Let $\gamma$ be a cycle. A subset $R$ of $V(\gamma)$ is a \emph{slot} of
$\gamma$ if $R$ is a singleton or an edge of $\gamma$.
A graph $H$ is a \emph{steering}, if there is a bipartition $(S, P)$ of
$V(H)$ such that $H[S]$ is a cycle, 
$N_H(P)$ is neither empty nor a slot of $H[S]$, 
and if $|P| \geq 2$ then the following conditions hold.
\begin{enumerate}
    \item $H[P]$ is a path. 
    \item No internal vertex of the path $H[P]$ is adjacent to any vertex in $S$.
    \item For each end $t$ of the path, $N_H(t) \cap S$ is a slot of $H[S]$.
\end{enumerate}
We call $H$ an \emph{$(S, P)$-steering} in this situation.
We call a steering a \emph{wheel} if it is an $(S, P)$-steering for some
bipartition $(S, P)$ of $V(H)$ such that $|P| = 1$; otherwise it is a \emph{non-wheel steering}.
\end{definition}

Figure~\ref{fig:steerings} shows some steerings. 
Figure~\ref{fig:steerings_bipartitions} shows two steerings with different $(S, P)$-bipartitions. The example (B)
is an $(S, P)$-steering for the second bipartition but not for the first bipartition.
Figure~\ref{fig:non-steering} shows some graphs that are not steerings. 
\begin{figure}[hbtp]
\begin{center}
\includegraphics[width=14cm]{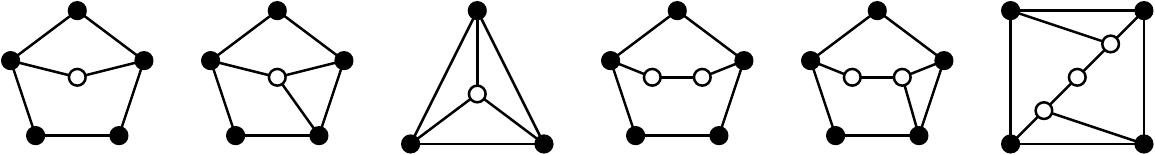}
\caption{Some steerings. White vertices belong to $P$ and black vertices belong to $S$, in a choice of
the bipartition $(S, P)$ that works. The first three are wheels while the remaining three are non-wheels.}
\label{fig:steerings}
\end{center}
\end{figure}

\begin{figure}[hbtp]
\begin{center}
\includegraphics[width=10cm]{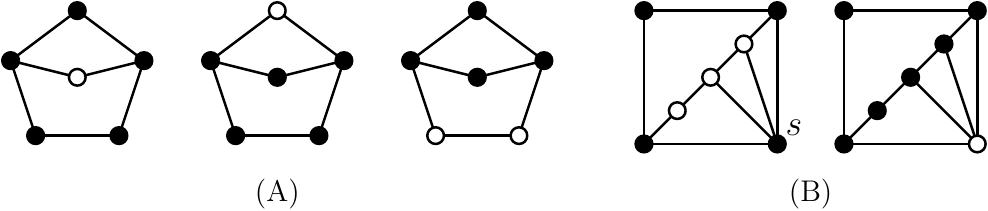}
\caption{(A) A steering with three possible $(S, P)$ bipartitions. Because of the first or the second bipartition,
it is a wheel. (B) Also a wheel. For the first $(S, P)$ bipartition, it is not an $(S, P)$-steering since 
an internal vertex of the path on $P$ are adjacent to $s \in S$. The second $(S, P)$ bipartition shows that
it is a wheel.}
\label{fig:steerings_bipartitions}
\end{center}
\end{figure}

\begin{figure}[hbtp]
\begin{center}
\includegraphics[width=8cm]{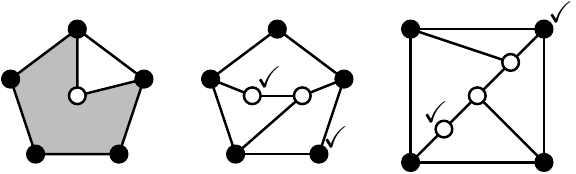}
\caption{Some graphs that are not steerings. In the shown $(S, P)$ bipartition,
the first example $H$ is not a steering since $N_H(P) \cap S$ is a slot.
The second example is not a steering since $N_H(t) \cap S$ is not a slot for the right
end $t$ of the path $H[P]$. The third example is not a steering since an internal vertex of
the path $H[P]$ is adjacent to a vertex in $S$. It can be verified that these graphs
are not $(S, P)$-steerings for any other $(S, P)$ bipartition.
It is explained in the main text why any of these plane graphs cannot be
$L_G[X]$ for a PMC $X$ of a triconnected plane graph $G$.}
\label{fig:non-steering}
\end{center}
\end{figure}

\begin{proposition}
    \label{prop:steering-planar}
    Let $H$ be a steering. Then $H$ is biconnected and planar.
    Moreover, the planar embedding of $H$ is unique up to combinatorial equivalence.
\end{proposition}
\begin{proof}
    Let $H$ be a $(S, P)$-steering. We first confirm that $H$ is biconnected. 
    Let $v$ be an arbitrary vertex of $H$. If $v \in S$, then $H[S \setminus \{v\}]$ is a path and
    $N_H(P) \cap (S \setminus \{v\})$ is non-empty as $N_H(P) \cap S$ is neither empty nor a singleton, so $H[V(H) \setminus \{v\}]$ is connected.
    If $v \in P$, then each component of $H[P \setminus \{v\}]$ has a vertex adjacent to a vertex in $S$ and
    hence $H[V(H) \setminus \{v\}]$ is connected.

    We next show that $H$ is planar and has a unique planar embedding up to combinatorial equivalence.
 Suppose first that $|N_H(P)| = 2$. Then the two members $s_1$ and $s_2$ of $N_H(P)$ are of degree three and
 all other vertices of $H$ are of degree two. We draw three paths between $s_1$ and $s_2$ without crossings
 and hence $H$ is planar. The combinatorial embedding of $H$ in this case essentially consists of the clockwise ordering
 $\pi_1$ of $N_H(s_1)$ around $s_1$ and the clockwise ordering $\pi_2$ of $N_H(s_2)$ around $s_2$.
 For each $i \in \{1, 2\}$, we have two choices of $\pi_1$, one being the inverse of the other, 
 and if we flip $\pi_1$ then $\pi_2$ is flipped.
 Therefore, there are only two possible combinatorial embeddings of $H$, which are equivalent to each other.
 
 Suppose next that $|N_H(P)| \geq 3$. The planar drawing of $H$ can be obtained by first drawing the
 cycle $H[S]$, draw the path $H[P]$ in one of the faces of $H[S]$, and then
 drawing edges between vertices in $P$ and the vertices in $S$. If $|P| = 1$ then 
 these edges can certainly be drawn without crossings. Suppose $|P| \geq 2$ and
 let $t_1$ and $t_2$ be the two ends of the path $H[P]$. By the definition
 of an $(S, P)$-steering, $N_H[t_i] \cap S$ is a slot of the cycle $H[S]$ for $i \in \{1, 2\}$.
 A crossing of edges would be unavoidable only if $N_H[t_1] \cap S$ and $N_H[t_1] \cap S$
 are an identical edge of $H[S]$. But this is impossible since $N_H(P)$ would then be equal to this edge
 and hence would be a slot, violating the condition for $H$ to be an $(S, P)$-steering.
 In this case, where we are assuming that $|N_H(P)| \geq 3$, there is a vertex $v \in P$
 of degree $\geq 3$ in $H$. If $|P| = 1$ then $v$ is the unique member of $P$ and
 if $|P| \geq 2$ then $v$ is one of the ends $t_1$ and $t_2$ of the path $H[P]$.
 In either case $v$ is adjacent to at least two vertices in $S$. Since the ordering of
 $N_H(v)$ is constrained by the ordering of $S$ along the cycle $H[S]$, we have
 only two choices for the clockwise ordering $\pi_v$ of $N_H(v)$ around $v$, one being the inverse of the other.
 When we choose one of them, then the clockwise order of $S$ around the face containing $P$
 is determined and this in turn determines the clockwise ordering $\pi_u$ of
 $N_H(u)$ around $u$ for every vertex $u$ of degree $\geq 3$ in $H$.
 When we flip the choice of $\pi_v$ then $\pi_u$ is flipped for every vertex $u$ of degree $\geq 3$ in $H$.
 Therefore, there are only two combinatorial embeddings of $H$, which are equivalent to each other.
\end{proof}

From now on, when we refer to a steering $H$, we view $H$ as a plane graph rather than a combinatorial graph as originally defined.

The following lemma shows that $X$ is a PMC of a triconnected plane graph $G$ if $L_G[X]$ is a steering. Before going into technical details, it may be helpful to study the examples in Figures~\ref{fig:steerings} and \ref{fig:steerings_bipartitions}
and confirm the following.
\begin{enumerate}
\item The cycle bounding each face is chordless.
\item Every pair of vertices is either adjacent to each other or incident to a common face.
\end{enumerate}
Also observe that if $L_G[X]$ is one of the non-steering graphs 
in Figure~\ref{fig:non-steering}, then $X$ cannot be a PMC of $G$:
in the first example, the component of $G[V(G) \setminus X]$ lying in the gray face would be a full component of $X$, violating the first condition of Lemma~\ref{lem:BT-lemma} for $X$ to be a PMC of $G$;
in the second and third examples, the pair of checked vertices do not share a common face and therefore cannot 
belong to the neighborhood of a common component of $G[V(G) \setminus X]$, violating the second condition of
Lemma~\ref{lem:BT-lemma} for $X$ to be a PMC of $G$.
\begin{lemma}
\label{lem:steering-is-PMC}
    Let $X$ be a vertex set of a triconnected plane graph $G$ and
    suppose $L_G[X]$ is a steering. Then, $X$ is a PMC of $G$.
\end{lemma}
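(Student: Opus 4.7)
The plan is to verify the three conditions of the Bouchitt\'e--Todinca characterization (Lemma~\ref{lem:BT-lemma}). Let $H = L_G[X]$ and fix a bipartition $(S, P)$ witnessing that $H$ is a steering. My strategy has three parts: first establish $H$ as a biconnected plane graph in its embedding inherited from $L_G$; second, catalogue the regions of $H$ and associate each with a minimal separator of $G$; third, verify the three BT conditions.

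Biconnectivity of $H$ follows directly from the steering structure: $H[S]$ is a cycle, $H[P]$ is connected, and since $N_H(P)$ is not a slot we have $|N_H(P)|\geq 2$, so $H-v$ remains connected for every $v$ (the cycle survives removal of any cycle-vertex and still connects to $H[P]$ via at least one remaining attachment point, while removing a vertex of $P$ either leaves a connected $H[P]-v$ or produces two pieces both of which remain connected to $H[S]$ through the endpoints of the socket). For the inherited embedding to be a plane graph I would invoke Proposition~\ref{prop:plane-latch-subgraph} and rule out any face $f$ of $G$ with $|V(f)\cap X|\geq 4$; this is done by showing that a $4$-clique of $H$ lying on such an $f$ would violate either the socket/hinge conditions of the steering or the triconnectivity of $G$.

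Next I catalogue the regions of $H$. The cycle $H[S]$ splits $\Sigma$ into two sides; the side containing no vertex of $P$ is a single region, the \emph{outer region}, bounded by $H[S]$. The $P$-containing side is subdivided by $H[P]$ and its edges to $S$ into finitely many sub-regions. When $|P|=1$ with $P=\{u\}$, these sub-regions are bounded by $u$, two consecutive vertices of $N_H(u)\cap S$ along $H[S]$, and the arc of $H[S]$ between them. When $|P|\geq 2$, the socket $\{R_1,R_2\}$ and the optional hinge $s$ prescribe the sub-region shapes; each has a chordless bounding cycle in $L_G$ directly from the steering conditions. I then show every region $r$ is non-empty: an empty region would force $V(\partial r)$ into a single face $f$ of $G$, contradicting Proposition~\ref{prop:plane-latch-subgraph} when $|V(\partial r)|\geq 4$ and contradicting triconnectivity of $G$ locally when $|V(\partial r)|=3$. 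By Proposition~\ref{prop:compo_in_region}, each region $r$ then contains a unique component $C_r\in\CC_G(X)$, and combining chordlessness of $\partial r$ with the reasoning in the proof of Proposition~\ref{prop:min-sep-of-triconn} gives $N_G(C_r)=V(\partial r)$, so $V(\partial r)$ is a minimal separator of $G$.

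The three BT conditions now follow. Condition~2 is immediate from the previous paragraph. Condition~1 (no full component) holds because $H$ has at least two regions, so $V(\partial r)\subsetneq X$ for every region. Condition~3 asks, for $u,v\in X$ non-adjacent in $G$, for a component $C\in\CC_G(X)$ with $u,v\in N_G(C)$; I would show that $u$ and $v$ share the boundary of some region, using the outer region for $u,v\in S$, the path-adjacent sub-regions for $u,v\in P$, and the socket/hinge structure for the mixed case $u\in S$, $v\in P$. I expect the main obstacles to be the non-emptiness argument for regions, which must carefully interleave the steering structure with triconnectivity of $G$, and Condition~3 in the mixed case, where the socket-hinge configuration is essential for guaranteeing that every non-adjacent pair of vertices in $X$ shares a region boundary.
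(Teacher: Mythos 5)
Your overall route is the same as the paper's: split the sphere into the outer region bounded by $H[S]$ and the inner regions cut out by $H[P]$ and its edges to $S$, show the bounding cycles are chordless in $L_G$, and then verify the three conditions of Lemma~\ref{lem:BT-lemma}. The genuine gap is your claim that \emph{every} region of $L_G[X]$ is non-empty, together with the asserted ``local triconnectivity'' contradiction for an empty triangular region. That claim is false: empty triangular regions are completely normal for steerings. For instance, $G=K_4$ is triconnected and planar, $L_G[X]=K_4$ for $X=V(G)$ is a steering (and a PMC), and all four of its regions are empty; more generally, a triangle of $L_G[X]$ made of two boundary edges and one chord of a quadrilateral face of $G$, or a triangular face of $G$ that survives as a region of $L_G[X]$, bounds an empty region without any violation of triconnectivity. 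This matters precisely for Condition 3: when the only region whose boundary contains both $x$ and $y$ is an empty triangle, you have no component $C\in\CC_G(X)$ to certify the pair. The paper's proof has an extra step here which your plan lacks and which cannot be recovered simply by dropping the false claim: in that situation $\{x,y\}$ is an edge of $L_G[X]$, and if it is not an edge of $G$ one passes to the \emph{other} region incident to this edge, which must be non-empty (an empty one would put four vertices of $X$ on a single face of $G$, contradicting Proposition~\ref{prop:plane-latch-subgraph}), and uses the component inside that region. Conditions 1 and 2 survive, since they only use the non-empty regions, but as written your Condition 3 argument fails on any steering with an empty triangular region, which includes the $K_4$ case your own proof must cover.

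A secondary issue is your planarity step. The argument ``a $4$-clique of $H$ on a face of $G$ would violate the socket/hinge conditions or triconnectivity'' does work when $|S|\geq 4$ or $|P|\geq 2$ (the socket/hinge conditions then exclude any $4$-clique in a steering), but it cannot work for the $K_4$ steering itself: the four vertices of a quadrilateral face of a triconnected plane graph (a face of the cube, say) induce a $K_4$ in the latching graph with no violation of triconnectivity, so plane-ness of $L_G[X]$ does not follow from the steering conditions in that case. The paper's proof does not derive plane-ness either; its region decomposition presupposes it. So in your first part you should either treat the $K_4$ case separately or make the plane-graph property part of what is assumed of $L_G[X]$, rather than claim to deduce it from triconnectivity.
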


\begin{proof} 
Suppose $L_G[X] = H$ where $H$ is an $(S, P)$-steering.
This equality means an equality as plane graphs, since we are viewing a steering as a biconnected plane graph, 
as we mentioned in the remark following Proposition~\ref{prop:steering-planar}.

Due to Proposition~\ref{prop:compo_in_region}, each component in $\calC_G(X)$ is contained in a region of
$H$ and each region of $H$ contains at most one component in $\calC_G(X)$.

    We show that $X$ satisfies the two conditions for PMCs in Lemma~\ref{lem:BT-lemma}:
    (1) There is no full component associated with $X$;
    (2) For every pair of vertices $x$ and $y$ in $X$,
    either $x$ is adjacent to $y$ in $G$ or there is some $C \in \calC_G(X)$
    such that $x, y \in N_G(C)$.
    
    To show that Condition (1) holds, let $C$ be an arbitrary component in $\calC_G(X)$ and
    $\gamma_C$ be the cycle bounding the region of $H$ containing $C$.
    If $\gamma_C$ is $H[S]$, then $C$ is not a full component associated with $X$
    as $S$ is a proper subset of $X$. Otherwise, there is a vertex in $S$ that does not belong to
    $\gamma_C$, since $N_H(P)$ is not a slot of the cycle $H[S]$, and therefore $C$ is not a full component 
    associated with $X$.

    To show that Condition (2) holds, 
    we first show that, for arbitrary two members $x$ and $y$ of $X$,
    there is some region of $H$ whose boundary contains both $x$ and $y$.
    If $x, y \in S$, then this certainly holds since $H$ has a region bounded by the cycle $H[S]$.
    So suppose that at least one of $x$ and $y$, say $x$, belongs to $P$.
    We argue in a few cases. 
    
    First suppose that $|P| = 1$ and hence $P = \{x\}$ and $y \in S$.
    Then, there is certainly a region of $H$ bounded by a cycle consisting of $x$ and a subpath
    of the cycle $H[S]$ that contains $y$. See the first three examples in Figure~\ref{fig:steerings}.

    Next suppose that $|P| \geq 2$. Since there is no internal vertex of the path $H[P]$ adjacent
    in $H$ to any vertex in $S$, there are two regions of $H$ bounded by cycles
    containing the path $H[P]$. Let $\gamma_1$ and $\gamma_2$ be those cycles. 
    Then, $\gamma_i$, for each $i \in \{1, 2\}$, consists of $H[P]$ and a subpath $p_i$
    of $H[S]$. Since $N_H(t) \cap S$ for each end $t$ of the path $H[P]$ is a slot,
    we have $V(p_1) \cap V(p_2) = S$. Therefore, either $\gamma_1$ or $\gamma_2$ contains
    both $x$ and $y$. See the last three examples in Figure~\ref{fig:steerings}.

    We are ready to show that Condition (2) holds. 
    Let $x$ and $y$ be arbitrary two vertices 
    in $X$. As we have shown above, there is a cycle $\gamma$ bounding a region
    of $H$ such that $x, y \in V(\gamma)$. If this region contains some
    $C \in \calC_G(X)$, then we are done since $N_G(C) = V(\gamma)$.
    Otherwise, $\gamma$ bounds an empty region and must be a triangle 
    since otherwise $L_G[X]$ would not be a plane graph. 
    Therefore, $\{x, y\}$ is an edge of $L_G[X]$. If this is an edge of $G$ then
    we are done, so suppose not. Let $\gamma'$ be another
    cycle bounding a region of $L_G[X]$ that contains the edge $\{x, y\}$.
    This region bounded by $\gamma'$ cannot be empty since, if it were,
    then we would have a face of $G$ incident
    to four or more vertices of $X$, and $L_G[X]$ would not be a plane graph.
    We are done, since $x, y \in N_G(C')$ where $C'$ is the member of $\calC_G(X)$ contained
    in the region bounded by $\gamma'$.
\end{proof}

The converse of this lemma is shown in several steps. 
We start with a technical proposition.

\begin{proposition}
\label{prop:subset-enough}
Let $X$ be a PMC of a triconnected plane graph $G$.
If $L_G[X']$ is a steering for some $X' \subseteq X$, then
$X = X'$ and hence $L_G[X]$ is a steering.
\end{proposition}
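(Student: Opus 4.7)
The plan is to observe that the statement is essentially a one-step consequence of two results already available in the paper. First, by Lemma~\ref{lem:steering-is-PMC}, whenever $L_G[Y]$ is a steering for some vertex set $Y$ of the triconnected plane graph $G$, the set $Y$ itself must be a PMC of $G$. Applying this to $Y = X'$ shows that $X'$ is a PMC of $G$.

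Second, we invoke Lemma~\ref{lem:pmc_subset}, the Bouchitt\'e--Todinca observation that no proper subset of a PMC can itself be a PMC. Since we now have two PMCs $X' \subseteq X$, this forces $X' = X$. The second part of the conclusion, that $L_G[X]$ is a steering, is then just a rewriting of the hypothesis. There is no real obstacle here: the whole proof is three sentences, with the only content being the correct citations of Lemma~\ref{lem:steering-is-PMC} and Lemma~\ref{lem:pmc_subset}. In particular, no new combinatorial or topological argument about steerings, latching graphs, or minimal separators is needed, because all of the structural work has already been done in proving Lemma~\ref{lem:steering-is-PMC}.
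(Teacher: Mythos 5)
Your proposal is correct and matches the paper's own proof exactly: both deduce from Lemma~\ref{lem:steering-is-PMC} that $X'$ is a PMC and then apply Lemma~\ref{lem:pmc_subset} to conclude $X' = X$.
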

\begin{proof}
    If $L_G[X']$ is a steering, $X'$ is a PMC of $G$
    due to Lemma~\ref{lem:steering-is-PMC}.
    Since no proper subset of a PMC is a PMC (Lemma~\ref{lem:pmc_subset}), 
    we have $X = X'$.
\end{proof}

The next lemma deals with a special case where every minimal separator contained in a PMC $X$
consists of three vertices.

\begin{lemma}
    \label{lem:all-triang}
    Let $X$ be a PMC of a triconnected plane graph $G$ and suppose
    that, for every $C \in \calC_G(X)$, $|N_G(C)| = 3$.
    Then, $L_G[X]$ is a $K_4$.
\end{lemma}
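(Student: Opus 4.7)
My plan is to show that the hypotheses force $L_G[X]$ to be simultaneously complete and planar on at least four vertices, which leaves no option but $K_4$. From Proposition~\ref{prop:pmc-planar} we already know that $L_G[X]$ is a biconnected plane graph, so in particular $L_G[X]$ is planar; this is the ``capping'' ingredient that will ultimately bound $|X|$ from above once completeness has been established.

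The main step is to argue that $L_G[X]$ is a complete graph. Fix any two distinct $u, v \in X$. If $\{u, v\} \in E(G)$, the edge is already in $L_G[X]$. Otherwise, condition~3 of Lemma~\ref{lem:BT-lemma} supplies some $C \in \CC_G(X)$ with $u, v \in N_G(C)$; the hypothesis $|N_G(C)| = 3$ together with Proposition~\ref{prop:min-sep-of-triconn} makes $L_G[N_G(C)]$ a $3$-cycle (a triangle), so $\{u, v\}$ is once again an edge of $L_G[X]$. For the lower bound on $|X|$, I would invoke condition~1 of Lemma~\ref{lem:BT-lemma}: no $C \in \CC_G(X)$ is a full component, so $N_G(C) \subsetneq X$ for every such $C$, which combined with $|N_G(C)| = 3$ gives $|X| \geq 4$.

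With $L_G[X]$ a complete graph on at least four vertices that is also planar, the nonplanarity of $K_5$ forces $|X| = 4$, and hence $L_G[X] = K_4$. There is no real obstacle here; the proof is essentially a matter of reading off the right consequence from each of the three PMC conditions in Lemma~\ref{lem:BT-lemma} and combining them with the planarity of $L_G[X]$. The only subtlety worth flagging is that the completeness argument implicitly assumes $\CC_G(X)$ is nonempty; this is automatic unless $X = V(G)$, in which case the PMC condition forces $G$ to be a clique and hence $K_4$ by triconnected planarity, so the statement still holds trivially.
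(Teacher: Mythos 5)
Your proof is correct, but it closes the argument by a genuinely different route from the paper. The two proofs share the same engine in the first half: a non-adjacent pair of vertices of $X$ lies in $N_G(C)$ for some $C \in \CC_G(X)$ by condition~3 of Lemma~\ref{lem:BT-lemma}, and since $C$ is minimally separated (condition~2, which you should cite explicitly, since it is what licenses Proposition~\ref{prop:min-sep-of-triconn}) with $|N_G(C)| = 3$, the set $L_G[N_G(C)]$ is a triangle and supplies the missing edge. The paper applies this only to pairs $\{x, s\}$ with $s$ on one fixed region-bounding triangle $S$ of $L_G[X]$, while you apply it to all pairs and conclude that $L_G[X]$ is complete. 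The divergence is in how $|X|$ is capped at four: the paper notes that $L_G[S \cup \{x\}]$ is a $K_4$, hence an $(S, \{x\})$-steering, and invokes Proposition~\ref{prop:subset-enough} (i.e.\ Lemma~\ref{lem:steering-is-PMC} combined with Lemma~\ref{lem:pmc_subset}) to force $X = S \cup \{x\}$; you instead combine completeness with the planarity of $L_G[X]$ from Proposition~\ref{prop:pmc-planar}, the lower bound $|X| \geq 4$ from condition~1, and the non-planarity of $K_5$. Your finish is more elementary and entirely independent of the steering machinery of Section~\ref{sec:charact}, which is a genuine simplification; the paper's finish, by contrast, reuses Lemma~\ref{lem:steering-is-PMC}, keeping the lemma aligned with the steering framework it is embedded in. Your caveat about the vacuous case $\CC_G(X) = \emptyset$ is handled acceptably, and it is worth noting the paper's own proof tacitly assumes the analogous non-degeneracy ($X \setminus S \neq \emptyset$), so you are not losing anything there.
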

\begin{proof}
Recall that, due to Proposition~\ref{prop:pmc-planar}, $L_G[X]$ is a
biconnected plane graph. Recall also that every $C \in \calC_G(X)$
is contained in a region of $L_G[X]$ bounded by a cycle whose vertex set
is $N_G(C)$.
    Let $\gamma$ be a cycle that bounds a region $r$ of $L_G[X]$ and
    let $S = V(\gamma)$. If $r$ is empty then $|S| = 3$ since
    otherwise $L_G[X]$ would not be a plane graph (Proposition~\ref{prop:plane-latch-subgraph}). 
    Otherwise $r$ contains
    some $C \in \calC_G(X)$ and we also have $|S| = 3$ due to the assumption.
    Let $x \in X \setminus S$ and $s \in S$ be arbitrary.
    We show that $x$ is adjacent to $s$ in $L_G[X]$.
    If $x$ is adjacent to $s$ in $G$ then we are done.
    Otherwise, due to the second condition for PMCs in Lemma~\ref{lem:BT-lemma}, 
    there is some $C \in \calC_G(X)$ such that
    $x, s \in N_G(C)$. Since the region of $L_G[X]$ containing $C$
    is bounded by a triangle on $N_G(C)$, $x$ is adjacent to
    $s$ in $L_G[X]$.
    Therefore, $x$ is adjacent in $L_G[X]$ to every $s \in S$
    and hence $L_G[S \cup \{x\}]$ is a $K_4$.
    As $L_G[S \cup \{x\}]$ is an $(S, \{x\})$-steering, we have 
    $X = S \cup \{x\}$ due to Proposition~\ref{prop:subset-enough}
    and hence $L_G[X]$ is a $K_4$.
\end{proof}

To deal with the more general case, we use the following notion of \emph{arches}.
\begin{definition}
Let $G$ be a triconnected plane graph and let
$S$ be a minimal separator of $G$ with $|S| \geq 4$. 
An \emph{arch} of $S$ is a subset $P$ of $V(G) \setminus S$
such that $L_G[P]$ is a path and 
$N_{L_G}(P) \cap S$ is neither empty nor a slot of the cycle $L_G[S]$.
\end{definition}
If $L_G[S \cup P]$ is an $(S, P)$-steering, then $P$ is certainly an arch of $S$.
The converse does not hold: if $P$ is an arch of $S$ then $L_G[S \cup P]$
is not necessarily an $(S, P)$-steering, as the definition of steerings requires 
more conditions to be satisfied by $L_G[S \cup P]$.
We show, however, in the next lemma that if $P$ is an inclusion-wise minimal arch
then $L_G[S \cup P]$ is a steering. A caveat: $L_G[S \cup P]$ is not necessarily an
$(S, P)$-steering in this case; it may be an $(S', \{s\})$-steering where $S' = (S \cup P) 
\setminus \{s\}$ for some $s \in S$. In Lemma~\ref{lem:pmc-has-arch}, we show that
if $X$ is a PMC then every minimal separator $S \subset X$ with $|S| \geq 4$ has 
an arch $P$ that is a subset of $X \setminus S$. It follows from these two lemmas
that, for every minimal separator $S$ with $|S| \geq 4$ that is a subset of a PMC $X$, 
there is some $P \subset X \setminus S$ such that $L_G[S \cup P]$ is a steering.
Due to Lemma~\ref{lem:steering-is-PMC}, $S \cup P$ is a PMC and, since
no proper subset of a PMC is a PMC (Lemma~\ref{lem:pmc_subset}), we conclude
that $X = S \cup P$ and $L_G[X]$ is a PMC, which is an essential part of
Theorem~\ref{thm:PMC-is-steering}.

\begin{lemma}
\label{lem:min-arch2steering}
Let $S$ be a minimal separator of a triconnected plane graph
$G$ with $|S| \geq 4$.
Suppose $S$ has an arch and let $P$ be 
an inclusion-wise minimal arch of $S$. 
Let $H = L_G[S \cup P]$.
Then, either $H$ is an $(S, P)$-steering 
or there is some $s \in N_{L_G}(P) \cap S$ such that
$H$ is an $((S \cup P) \setminus \{s\}, \{s\})$-steering.
\end{lemma}
\begin{proof}
If there is some $v \in P$ such that $N_{L_G}(v) \cap S$ is neither empty nor a slot of
the cycle $L_G[S]$, then $\{v\}$ is an arch of $S$ and, since $P$ is minimal, $P = \{v\}$.
We are done, since $H$ is an $(S, \{v\})$-steering.

Suppose $N_{L_G}(v) \cap S$ is either empty or a slot of $L_G[S]$ for every $v \in P$.
Since $N_{L_G}(P) \cap S$ is not a slot of $L_G[S]$, there are two vertices $v_1, v_2 \in P$
and two vertices $s_1, s_2 \in S$ such that $v_i$ is adjacent to $s_i$ for $i \in \{1, 2\}$
and $\{s_1, s_2\}$ is not an edge of $S$. Let $p$ be a shortest path in $L_G[P]$ between
$v_1$ and $v_2$. Then, $V(p)$ is an arch of $S$ and, since $P$ is minimal, $P$ must be equal to 
$V(p)$. If no internal vertex of $p$ is adjacent in $L_G$ to any vertex in $S$, then 
$L_G[S \cup P]$ is an $(S, P)$-steering and we are done.
So suppose an internal vertex $v$ of $p$ is adjacent to some vertex in $S$.
Let $p_i$ be the subpath of $p$ between $v_i$ and $v$, for $i \in \{1, 2\}$.
Since $P$ is a minimal arch of $S$, neither $V(p_1)$ nor $V(p_2)$ is an arch of
$S$. Therefore, $N_{L_G}(\{v_i, v\}) \cap S$ is a slot for $i \in \{1, 2\}$.
This is possible only if there is a vertex $s \in S$ such that $s$ is adjacent to both $s_1$ and $s_2$
on $L_G[S]$, $N_{L_G}(v_i)$ is either $\{s_i\}$ or $\{s_i, s\}$ for $i \in \{1, 2\}$, and 
$N_{L_G}(v) \cap S = \{s\}$. Moreover, since $P = V(p)$ is a minimal arch of $S$, no vertices in $P$
other than $v$, $v_1$, or $v_2$ are adjacent to any vertex in $S$. Let $p_3$ be the subpath of 
$L_G[S]$ between $s_1$ and $s_2$ avoiding $s$ and let $\gamma$ be the cycle consisting of $p_3$
and $p$. 
Let $S' = V(\gamma) = (S \cup P) \setminus \{s\}$. We claim that $\gamma$ is chordless in 
$L_G$, that is, $L_G[S'] = \gamma$. This is because $p_3$ is chordless since $L_G[S]$ is 
a cycle by assumption, $p$ is chordless since it is the shortest path in $P$ between $v_1$ and $v_2$,
and the only edges between $V(p)$ and $V(p_3)$ are the edges $\{v_1, s_1\}$ and $\{v_2, s_2\}$.
Since $N_{L_G}(s) \cap S' = \{s_1, v, s_2\}$, $L_G[S \cup P]$ is an $(S', \{s\})$-steering as desired.
\end{proof}

\begin{lemma}
    \label{lem:pmc-has-arch}
    Let $X$ be a PMC of a triconnected plane graph $G$ and
    let $S \subseteq X$ be a minimal separator of $G$ such that $|S| \geq 4$. 
    Then, there is an arch $P$ of $S$ such that $P \subseteq X \setminus S$.  
\end{lemma}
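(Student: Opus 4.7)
The plan is to locate a region $r$ of $L_G[X]$ sharing an edge with the cycle $L_G[S]$, whose bounding chordless cycle $\gamma_r$ contains a vertex of $X \setminus S$, and to extract the required arch from the structure of $\gamma_r$. Propositions~\ref{prop:pmc-planar} and~\ref{prop:min-sep-of-triconn} set the stage: $L_G[X]$ is a biconnected plane graph whose regions are bounded by chordless cycles of $L_G$, and $L_G[S]$ is a cycle dividing the sphere into two regions. Connectedness of $L_G[X]$ together with $X \setminus S \neq \emptyset$ forces some $L_G$-edge to span $S$ and $X \setminus S$, so at least one region $r^S$ of $L_G[S]$ contains $X \setminus S$-vertices in its interior. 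For any edge $e = \{s_a, s_b\}$ of $L_G[S]$, the region $r$ of $L_G[X]$ adjacent to $e$ on the $r^S$-side cannot have $\gamma_r = L_G[S]$ (otherwise $r = r^S$, contradicting the presence of interior $X \setminus S$-vertices) nor can $V(\gamma_r)$ be a proper subset of $S$ (the cycle graph $L_G[S]$ has no proper sub-cycle); hence $\gamma_r$ contains some vertex of $X \setminus S$.

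I then trace $\gamma_r$ from $s_b$ in the direction away from $s_a$, let $P^*$ be the first maximal run of $X \setminus S$-vertices encountered, and let $s'_a, s'_b \in S \cap V(\gamma_r)$ be the $\gamma_r$-vertices flanking $P^*$. Then $P^*$ induces a path of $L_G[X \setminus S]$ and $\{s'_a, s'_b\} \subseteq N_{L_G}(P^*) \cap S$. If $\{s'_a, s'_b\} \neq \{s_a, s_b\}$ (the non-degenerate case), then $s'_a$ and $s'_b$ are non-adjacent on $\gamma_r$, so any $L_G$-edge between them would be a chord of the chordless $\gamma_r$; thus $s'_a$ and $s'_b$ are non-adjacent in $L_G[S]$, $N_{L_G}(P^*) \cap S$ is not a slot of $L_G[S]$, and the $L_G[X \setminus S]$-component containing $P^*$ is the desired arch.

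The residual degenerate case $\{s'_a, s'_b\} = \{s_a, s_b\}$ is unavoidable when $|S| = 3$: distinct vertices of the triangle $L_G[S]$ are always adjacent in $L_G[S]$, so the non-degenerate subcase would immediately yield a chord. Here $\gamma_r = s_a, s_b, P^*, s_a$ contributes only the slot $\{s_a, s_b\}$, so I repeat the construction with the other edge $e' = \{s_a, s_c\}$ of $L_G[S]$ incident to $s_a$ to obtain a run $P^{*\prime}$ with $\{s_a, s_c\} \subseteq N_{L_G}(P^{*\prime}) \cap S$, and argue that $P^*$ and $P^{*\prime}$ lie in a common $L_G[X \setminus S]$-component $P$ by examining the region of $L_G[X]$ at $s_a$ between two consecutive $L_G[X]$-edges from $s_a$ into $X \setminus S$ on the $r^S$-side. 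The chordless boundary of such a region cannot contain any other $S$-vertex that is $L_G[S]$-adjacent to $s_a$ (the corresponding $L_G[S]$-edge would be a chord), which for $|S| = 3$ excludes every other $S$-vertex; the remaining boundary vertices then form a path of $L_G[X \setminus S]$ merging the two $X \setminus S$-endpoints. Chaining this merging around the cyclic order at $s_a$ shows all $X \setminus S$-neighbors of $s_a$ on the $r^S$-side lie in a single component $P$, so $N_{L_G}(P) \cap S \supseteq \{s_a, s_b, s_c\}$ contains three distinct vertices and is not a slot. The main obstacle is this combining step at $s_a$, essential whenever the direct chord argument degenerates (and unavoidable for $|S| = 3$).
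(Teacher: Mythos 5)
Your non-degenerate case is correct, and so is the merging argument when $|S|=3$; but there is a genuine gap in between: the degenerate case $\{s'_a,s'_b\}=\{s_a,s_b\}$ is not confined to $|S|=3$, while your treatment of it is justified only for $|S|=3$. Concretely, let $G$ be the planar triangulation obtained from the graph consisting of a $6$-cycle $s_1\cdots s_6$ and a centre $u$ adjacent to $s_1,s_3,s_5$ by placing one new vertex inside each of its four faces and joining it to all vertices of that face. Then $G$ is triconnected, $L_G=G$, and for $X=\{u,s_1,\ldots,s_6\}$ the graph $L_G[X]$ is a steering, so $X$ is a PMC by Lemma~\ref{lem:steering-is-PMC}; moreover $S=\{u,s_1,s_2,s_3\}$ induces a chordless $4$-cycle, hence is a minimal separator with $|S|=4$ by Corollary~\ref{cor:minsep4}. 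For the edge $e=\{u,s_1\}$ of $L_G[S]$, the region of $L_G[X]$ on the $r^S$-side of $e$ is bounded by $u,s_1,s_6,s_5$, which meets $S$ only in $\{u,s_1\}$: the degenerate case with $|S|\geq 4$. Since you fixed an arbitrary edge $e$, your proof must handle this situation, and it currently does not.

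In your recovery step the load-bearing claim is that, for each intermediate region at $s_a$ between consecutive $L_G[X]$-edges from $s_a$ into $X\setminus S$ on the $r^S$-side, the boundary minus $s_a$ is a path of $L_G[X\setminus S]$. Your chordality argument only excludes from that boundary the $S$-vertices that are $L_G[S]$-adjacent to $s_a$, which exhausts $S\setminus\{s_a\}$ only when $|S|=3$; for $|S|\geq 4$ such a boundary could a priori contain an $S$-vertex non-adjacent to $s_a$, the claimed path breaks, and the chaining that delivers $\{s_a,s_b,s_c\}\subseteq N_{L_G}(P)\cap S$ is unsupported. The hole is patchable: if some intermediate boundary contains a vertex of $S\setminus\{s_a\}$, then the maximal run of $X\setminus S$-vertices on that boundary starting at an $X\setminus S$-neighbour of $s_a$ is flanked by $s_a$ and some $t\in S$; since $s_a$'s neighbours on that chordless cycle lie in $X\setminus S$, the pair $\{s_a,t\}$ is not an $L_G$-edge, hence not an edge of $L_G[S]$, so this run is itself an arch. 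But that argument is absent from your write-up, and without it the lemma is proved only for $|S|=3$ together with the non-degenerate instances. For comparison, the paper sidesteps the case distinction entirely: it concatenates, over all edges $e$ of $L_G[S]$, paths $p_e$ through $X\setminus S$ chosen to maximize the enclosed area, shows the resulting closed walk cannot bound a region of $L_G[X]$, and extracts the arch from a repeated vertex or a chord. A minor point in addition: you use $X\setminus S\neq\emptyset$ without comment; it holds because a minimal separator has full components while a PMC has none.
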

\begin{proof}
Due to \cref{lem:BT-minsep-PMC}, $X \setminus S$ is non-empty and
is a subset of one of the two full components associated with $S$.
Let $r_0$ be the region of $L_G[S]$ that contains $X \setminus S$.

Suppose that there is no arch $P$ of $S$ such that $P \subseteq X \setminus S$. 
Then, for every pair of vertices $s$ and $s'$ in $S$ that are not adjacent to each other 
on the cycle $L_G[S]$, we can draw a curve between $s$ and $s'$ in $r_0$ without crossing
the drawing of $L_G[X]$. Therefore, there is a region $r$ of $L_G[X]$ contained in $r_0$ that is incident to all vertices in $S$. 
Let $\gamma$ be the cycle of $L_G[X]$ bounding $r$. Since $|V(\gamma)| \geq |S| \geq 4$, $r$ is non-empty 
(Proposition~\ref{prop:compo_in_region}) and hence contains
a component $C \in \calC_G(X)$. Therefore, $V(\gamma)$ must be a minimal separator of $G$, due to
Lemma~\ref{lem:BT-minsep-PMC}.
But this is impossible, since some edge of $L_G[S]$ is a chord of $\gamma$, contradicting
Proposition~\ref{prop:min-sep-of-triconn}.
\end{proof}

The following is our main theorem in this section.
\begin{theorem}
\label{thm:PMC-is-steering}
    A vertex set $X$ of  a triconnected plane graph $G$ is a PMC if and only if
    $L_G[X]$ is a steering.
\end{theorem}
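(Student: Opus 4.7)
The ``if'' direction is exactly Lemma~\ref{lem:steering-is-PMC}, so I would concentrate on showing that if $X$ is a PMC of a triconnected plane graph $G$ then $L_G[X]$ is a steering. My approach is a case split on the maximum size of $N_G(C)$ over $C \in \CC_G(X)$. Note that every such $C$ has $|N_G(C)| \geq 3$ because $G$ is triconnected and $N_G(C)$ is a minimal separator by Lemma~\ref{lem:BT-lemma}.

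If every $C \in \CC_G(X)$ satisfies $|N_G(C)| = 3$, Lemma~\ref{lem:all-triang} gives $L_G[X] \cong K_4$. I would then verify from Definition~\ref{def:steering} that $K_4$ is a steering: take any triangle as $S$ and the remaining vertex as $P$; then $L_G[X][S]$ is a $3$-cycle, $N_{L_G[X]}(P) = S$ has three elements and so is not a slot of $L_G[X][S]$, and since $|P| = 1$ the further conditions about paths, sockets, and hinges are vacuous.

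Otherwise, some $C \in \CC_G(X)$ has $|N_G(C)| \geq 4$. Set $S = N_G(C)$. Then $S \subseteq X$ (components in $\CC_G(X)$ have their neighborhoods in $X$), $S$ is a minimal separator of $G$ by the second condition of Lemma~\ref{lem:BT-lemma}, and $|S| \geq 4$. I would invoke Lemma~\ref{lem:pmc-has-arch} to obtain an arch $P_0 \subseteq X \setminus S$ of $S$, then shrink it to $P \subseteq P_0$ that is inclusion-minimal among arches of $S$ contained in $P_0$; since any proper subset of $P$ is automatically a subset of $P_0$, such $P$ is in fact globally inclusion-minimal among arches of $S$, so the hypothesis of Lemma~\ref{lem:min-arch2steering} is satisfied. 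That lemma then produces an $(S, P)$-steering $L_G[P \cup S]$, and since $P \cup S \subseteq X$, Proposition~\ref{prop:subset-enough} forces $X = P \cup S$, whence $L_G[X]$ is a steering.

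The theorem is essentially a packaging of Lemmas~\ref{lem:steering-is-PMC}, \ref{lem:all-triang}, \ref{lem:pmc-has-arch}, and \ref{lem:min-arch2steering} together with Proposition~\ref{prop:subset-enough}; the only mild subtlety is arranging the arch $P$ to be simultaneously a subset of $X \setminus S$ (so that $P \cup S \subseteq X$) and globally inclusion-minimal among arches of $S$ (so that Lemma~\ref{lem:min-arch2steering} applies as stated). Both are obtained by the ``shrink inside $P_0$'' step. I do not expect any genuine obstacle beyond this bookkeeping, since all the heavy structural work has been done in the preceding lemmas.
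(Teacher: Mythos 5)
Your proof is correct and follows essentially the same route as the paper: the same case split (all adjacent minimal separators of size three giving $K_4$ via Lemma~\ref{lem:all-triang}, versus an $|S|\geq 4$ separator handled via Lemma~\ref{lem:pmc-has-arch}, Lemma~\ref{lem:min-arch2steering}, and Proposition~\ref{prop:subset-enough}). Your explicit remark that minimality of the arch inside $P_0$ already gives global inclusion-minimality is the same observation the paper relies on (it takes a minimal arch among those contained in $X\setminus S$), just spelled out a bit more carefully.
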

\begin{proof}
Lemma~\ref{lem:steering-is-PMC} shows that if $L_G[X]$ is a steering then $X$ is a PMC.
We prove the other direction. Let $X$ be a PMC of $G$.

If $|N_G(C)| = 3$ for
every $C \in \calC_G(X)$, $L_G[X]$ is a $K_4$ due to Lemma~\ref{lem:all-triang} and 
hence is an $(X \setminus \{x\}, \{x\})$-steering for every $x \in X$.

Suppose there is some $C \in \calC_G(X)$ with $|S| \geq 4$ where $S = N_G(C)$.
Due to Lemma~\ref{lem:pmc-has-arch}, there is an arch of $S$ that is a subset of
$X \setminus S$. Let $P$ be an inclusion-wise minimal arch of $S$ that is a subset of $X \setminus S$.
Due to Lemma~\ref{lem:min-arch2steering}, $L_G[S \cup P]$ is a steering and hence $S \cup P$
is a PMC of $G$, due to Lemma~\ref{lem:steering-is-PMC}. 
Due to Proposition~\ref{prop:subset-enough}, $X = S \cup P$ and hence
$L_G[X]$ is a steering.
\end{proof}

\section{Polynomial delay generation of PMCs of a triconnected planar graph}
\label{sec:poly_delay}
Let $J$ be some combinatorial structure of size $n$ and let $\calS(J)$
be the set of some combinatorial objects defined on $J$. We say an algorithm
\emph{generates $\calS(J)$ with polynomial delay} if it outputs each member
of $\calS(J)$ exactly once and the time between two consecutive events is
$n^{O(1)}$, where an event is the initiation of the algorithm, an output,
or the termination of the algorithm.

The basic approach for generating $\Pi(G)$ for a given triconnected plane graph $G$
is as follows.
PMCs $X$ such that $|N_G(C)| = 3$ for every $C \in \calC_G(X)$ can be trivially
generated with polynomial delay, since $L_G[X]$ is a $K_4$ for each such $X$
due to Lemma~\ref{lem:all-triang}.
So, we concentrate on generating $\Pi'(G)$ consisting of members $X$ of $\Pi(G)$
such that $\calC_G(X)$ contains a component $C$ with $|S| \geq 4$ where $S = N_G(C)$.

We need an algorithm to generate minimal separators of $G$.
Although a polynomial delay algorithm for this task is known for general graphs~\cite{berry2000generating}, we use an algorithm specialized for triconnected plane graphs for a technical reason to become clear below.

We need the following result due to Uno and Satoh \cite{uno2014efficient}. 
\begin{lemma}
[Uno and Satoh \cite{uno2014efficient}]
 \label{lem:chordless}
    Given a graph $G$, all chordless cycles of $G$ can be generated
    with polynomial delay. 
    Given a graph $G$ and two vertices $s, t \in V(G)$, 
    all chordless paths of $G$ between $s$ and $t$ can be
    generated with polynomial delay.  
\end{lemma}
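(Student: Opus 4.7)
The plan is to use a standard binary-partition enumeration framework equipped with an extensibility oracle. For chordless $s$-$t$ paths in $G$, I would recursively build up partial chordless paths $P = (s = v_0, v_1, \ldots, v_k)$ together with a restricted subgraph $G_P$ obtained from $G$ by deleting, for each pair $(v_i, v_{i+1})$ with $0 \leq i < k$, all neighbors of $v_i$ other than $v_{i+1}$, and deleting $v_0, \ldots, v_{k-1}$ themselves. This restriction ensures that any path from $v_k$ to $t$ in $G_P$ can be concatenated with $P$ to form a chordless $s$-$t$ path in $G$, since no surviving vertex is adjacent in $G$ to any interior vertex of $P$. At each node of the search tree I branch over the neighbors of $v_k$ in $G_P$, updating the restriction accordingly, and the leaves of the tree are in bijection with the chordless $s$-$t$ paths of $G$.

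To achieve polynomial delay, I would use the following extensibility test before descending into each branch: check whether $v_k$ and $t$ lie in the same connected component of $G_P$. The key observation is that a shortest path between two vertices in any graph is automatically chordless, since any chord would yield a shorter path. Hence, whenever $v_k$ and $t$ are in the same component of $G_P$, a BFS-derived shortest path provides a certified chordless extension of $P$, so the branch is guaranteed to yield at least one output. This reduces the (in general NP-hard) induced-path extensibility question to a linear-time connectivity query.

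With this oracle in place, every internal node of the search tree has at least one descendant leaf, the depth of the tree is at most $|V(G)|$, and only polynomial work is spent per node. A standard reverse-search accounting (each output is ``charged'' the polynomial work along its root-to-leaf path, amortized against its siblings) then yields polynomial delay per generated chordless path.

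For chordless cycles, I would reduce to the path case. For each edge $e = \{u, v\} \in E(G)$, enumerate all chordless $u$-$v$ paths of length at least two in $G - e$ using the procedure above, and output each such path together with $e$ as a chordless cycle. Duplicate cycles are suppressed by a canonical-representative rule: emit a cycle $C$ only when $e$ is, say, the lexicographically smallest edge of $C$; this check is polynomial, preserving the delay bound. The main technical hurdle is the extensibility test in the path case, and the insight that shortest paths are chordless dissolves it; the cycle case then reduces cleanly.
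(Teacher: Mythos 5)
The paper does not prove this lemma at all (it is imported verbatim from Uno and Satoh), so the only question is whether your argument stands on its own. The path half does: the restriction graph $G_P$ is an induced subgraph, so a path that is chordless in $G_P$ is induced in $G$, and your deletions guarantee that no surviving vertex is adjacent to an interior vertex of $P$; hence chordless completions of $P$ correspond exactly to chordless $v_k$--$t$ paths in $G_P$, and the observation that a shortest path is automatically chordless turns the extensibility question into a connectivity check. With that oracle, every surviving branch has a leaf, depth is at most $n$, and the usual DFS accounting gives polynomial delay. This is a correct and self-contained argument for the path statement (Uno and Satoh's actual algorithm is a more refined version aimed at better per-output cost, but polynomial delay is all that is claimed here).

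The cycle half, however, has a genuine gap: suppressing duplicates by the rule ``emit the cycle only when $e$ is its lexicographically smallest edge'' does \emph{not} preserve polynomial delay. For a fixed edge $e$, the subroutine enumerates all chordless $u$--$v$ paths in $G-e$, i.e.\ all chordless cycles through $e$, and nothing bounds how many consecutive outputs fail the canonicity test; in the extreme, for the largest edge in the ordering \emph{every} cycle through it has a smaller edge, so all of its (possibly exponentially many) outputs are suppressed, and the time between the last emission and termination (or between two emissions earlier on) is not $n^{O(1)}$. The total running time is still output-polynomial, but that is weaker than the polynomial-delay claim. Two standard repairs are available: (i) bake the canonical form into the enumeration rather than filtering afterwards, e.g.\ anchor each cycle at its minimum vertex $v$ and its two cycle-neighbours $u,w \in N(v)$, and enumerate chordless $u$--$w$ paths in the subgraph induced by $\{u,w\} \cup \{x : x > v,\ x \notin N(v)\}$, so each chordless cycle is produced exactly once; or (ii) keep your per-edge subgenerators and interleave them with the suppression-scheduling technique of Bergougnoux, Kant\'e and Wasa (the same device this paper uses in Lemma~\ref{lem:poly-delay}), whose hypotheses---polynomially many subfamilies covering the target set, polynomial membership tests, and polynomial-delay subgenerators---are all satisfied here. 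As written, though, the claim that the lexicographic check ``preserves the delay bound'' is unjustified.
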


\begin{lemma}
    \label{lem:minsep-polydelay}
    Given a triconnected plane graph $G$, all minimal separators of $G$
    can be generated with polynomial delay. Moreover, for each $v \in V(G)$,
    all minimal separators of $G$ that do not contain $v$ can be generated with polynomial delay.
\end{lemma}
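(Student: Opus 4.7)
The plan is to exploit the tight correspondence between minimal separators of $G$ and chordless cycles of $L_G$ established in Proposition~\ref{prop:min-sep-of-triconn} and Corollary~\ref{cor:minsep4}, and to feed $L_G$ (which is simple by Proposition~\ref{prop:latching_simple} and constructible in polynomial time from the plane embedding of $G$) into the polynomial-delay chordless-cycle enumerator of Uno and Satoh (Lemma~\ref{lem:chordless}). I split the enumeration into two phases according to the size of $S$.

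The first phase handles minimal separators of size three. Since $G$ is triconnected, any $3$-vertex set $S$ such that $G - S$ is disconnected is automatically a minimal separator: no proper subset can separate any pair of vertices in distinct components of $G-S$, as that would contradict triconnectivity. I simply iterate over all $\binom{n}{3}$ triples and perform a polynomial-time connectivity check on $G - S$, so this phase runs in polynomial total time and therefore trivially with polynomial delay.

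The second phase handles minimal separators of size at least four. By Corollary~\ref{cor:minsep4} these correspond bijectively to vertex sets of chordless cycles of length $\geq 4$ in $L_G$. I run Uno and Satoh's algorithm on $L_G$ and output the vertex set of every chordless cycle of length at least four, silently discarding triangles. The only delicate point is that filtering could a priori destroy polynomial delay, but $L_G$ contains at most $\binom{n}{3} = O(n^{3})$ triangles, so at most $O(n^{3})$ intermediate outputs are discarded throughout the entire run. Each discarded triangle costs only Uno-Satoh's own polynomial delay, so the delay between any two useful outputs, as well as the delays from initialization to the first output and from the last output to termination, is bounded by $O(n^{3})$ times the Uno-Satoh delay, which is still polynomial in $n$.

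The ``moreover'' part (avoiding a fixed vertex $v$) is handled identically after replacing $L_G$ by $L_G - v$ in the second phase and restricting the triples in the first phase to those not containing $v$. Chordless cycles of $L_G - v$ are exactly the chordless cycles of $L_G$ avoiding $v$, because any potential chord between two non-$v$ vertices of such a cycle survives the deletion of $v$, so Corollary~\ref{cor:minsep4} again characterizes the outputs as precisely the minimal separators of $G$ that do not contain $v$. The main obstacle throughout is the delay analysis when filtering triangles from Uno-Satoh's stream, and the elementary polynomial bound on the number of triangles in $L_G$ resolves it.
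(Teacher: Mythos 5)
Your proposal is correct and takes essentially the same route as the paper: it enumerates chordless cycles of the latching graph $L_G$ (and of $L_G$ minus $v$ for the second part) with the Uno--Satoh algorithm of Lemma~\ref{lem:chordless} and translates them into minimal separators via Proposition~\ref{prop:min-sep-of-triconn} and Corollary~\ref{cor:minsep4}. Your extra care is in fact a refinement of the paper's one-line ``one-to-one correspondence'' claim: by treating size-$3$ separators by brute force and filtering out the at most $O(n^3)$ chordless triangles of $L_G$ (some of which bound empty regions and are not separators) with an explicit delay bound, you handle a detail the paper's proof glosses over.
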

\begin{proof}
    Given $G$, we generate all chordless cycles of $L_G$ with polynomial delay,
    using the algorithm of Uno and Satoh. These chordless cycles are in one-to-one correspondence
    with minimal separators of $G$
    due to Proposition~\ref{prop:min-sep-of-triconn}.
    For the second part, we generate all chordless cycles of $G[V(G) \setminus \{v\}]$.
    Since a chordless cycle of $G[V(G) \setminus \{v\}]$ is a chordless cycle of $G$, we have the result.
\end{proof}
\begin{remark}
    It is not clear if the algorithm \cite{berry2000generating} for generating minimal separators of a general graph can be adapted to support the second part of the lemma, which is used in the proof of Theorem~\ref{thm:poly-delay-pmcs}. We also note that the chordless path part of Lemma~\ref{lem:chordless} is used in
    the proof of Lemma~\ref{lem:gen-for-compo}.
\end{remark}

To design an algorithm to generate all PMCs based on our characterization of PMCs stated in Theorem~\ref{thm:PMC-is-steering},
we need some preparations.
\begin{definition}
    Let $G$ be a triconnected plane graph and $C$ a minimally separated component of $G$
    with $|S| \geq 4$ where $S = N_G(C)$.
    Recall that there are exactly two full components associated with $S$ and 
    let $C'$ be the full component distinct from $C$.
    A \emph{port} of $(S, C)$ is a vertex $u \in C'$ such that
    $N_{L_G}(u) \cap S$ is a slot of $L_G[S]$. We call the slot $N_{L_G}(u) \cap S$
    of $L_G[S]$ the \emph{slot for} $u$. A pair of ports $u_1$ and $u_2$ is \emph{valid}
    if the union of the slot for $u_1$ and the slot for $u_2$ is not a slot.
    A vertex $s \in S$ with two neighbors $s_1$ and $s_2$ on the cycle $L_G[S]$ 
    is a \emph{hinge} of a valid pair of ports $u_1$ and $u_2$
    if the slot for $u_1$ is either $\{s_1\}$ or $\{s_1, s\}$ and
    the slot for $u_2$ is either $\{s_2\}$ or $\{s_2, s\}$.
    See Figure~\ref{fig:ports} for examples of ports, valid and invalid port pairs, and
    a hinge.
\end{definition}
\begin{figure}[hbtp]
\begin{center}
\includegraphics[width=12cm]{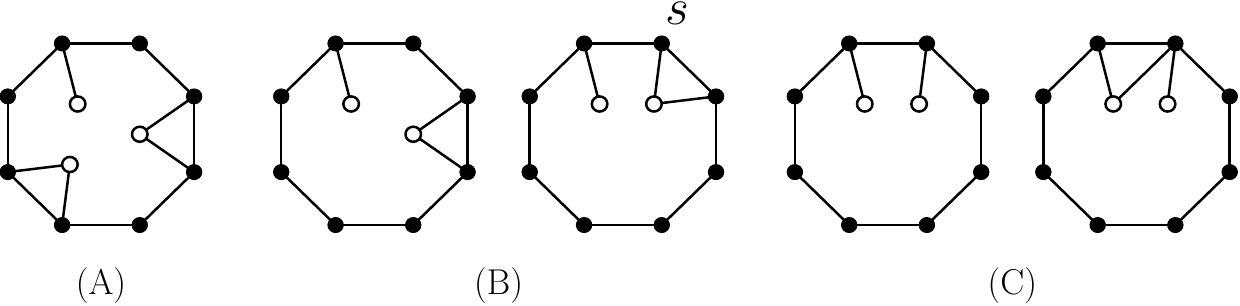}
\caption{(A) White vertices are some ports of $(S, C)$ where $S$ consists of black vertices and
$C$ is the full component associated with $S$ that lies in the outer face of the black cycle.
(B) Some valid pairs of ports. The valid pair in the second example has a hinge $s$. (C) Some invalid pairs of ports.}
\label{fig:ports}
\end{center}
\end{figure}

Ports are potential ends of a path $p$ such that $L_G[X]$ for $X = S \cup V(p)$ is a steering.
A pair of ports must be valid in order for the pair to be the ends of such path $p$.
We generate such paths using the algorithm for generating chordless paths.
The following lemma is the basis of this approach.
\begin{lemma}
    \label{lem:steerings-gen}   
Let $G$ be a triconnected plane graph and $C$ a minimally separated component of $G$
with $|S| \geq 4$, where $S = N_G(C)$. Let $C'$ be the other full component of $S$.
Let $P$ be an arbitrary subset of  $C'$ with $|P| \geq 2$ 
and let $X = S \cup P$.
\begin{enumerate}
    \item For each valid pair $u_1$ and $u_2$ of ports of $(S, C)$, let 
$A(C, u_1, u_2)$ be the subgraph of $L_G$ induced by
the vertex set $(C' \setminus N_{L_G}(S)) \cup \{u_1, u_2\}$.
Then, $L_G[X]$ is an $(S, P)$-steering if and only if
there is a valid pair $u_1$ and $u_2$ of the ports of $S$ such that 
$P = V(p)$ for some chordless path $p$ of $A(C, u_1, u_2)$ between $u_1$ and $u_2$.
\item For each valid pair $u_1$ and $u_2$ of ports of $(S, C)$ that has hinges and
each hinge $s$ of the pair, let $B(C, u_1, u_2, s)$ be the subgraph of $L_G$ induced by
the vertex set $(C' \setminus N_{L_G}(S)) \cup N_{L_G}(s) \cup \{u_1, u_2\}$.
Then, $L_G[X]$ is an $(S', \{s\})$-steering for $s \in S$, where $S' = X \setminus \{s\}$,
if and only if there is a valid pair of ports $u_1$ and $u_2$ of $(S, C)$ with a hinge $s$
such that $P = V(p)$ for some chordless path $p$ of 
$B(C, u_1, u_2, s)$ between $u_1$ and $u_2$.
\end{enumerate}
\end{lemma}
\begin{proof}
For part 1, 
suppose $L_G[X]$ is an $(S, P)$-steering where $P \subseteq C'$.
Let $H = L_G[X]$.
Due to Definition~\ref{def:steering}, $H[P] = L_G[P]$ is a path.
Let $u_1$ and $u_2$ be the two ends of this path. Then,
$u_1$ and $u_2$ are ports of $S$ and, moreover, the pair of these ports is valid.
Since no internal vertex of the path $L_G[P]$ is adjacent in $L_G$ to any vertex in $S$,
$L_G[P]$ is a path in $A(C, u_1, u_2)$. 
It is a chordless path, since if it had a chord in $A(C, u_1, u_2)$, then the graph $L_G[P]$ would contain
that chord and $L_G[P]$ would not be a path.

For the other direction, suppose that there is a valid pair of ports
$u_1$ and $u_2$ of $S$ such that $L_G[P]$ is a chordless path of $A(C, u_1, u_2)$ between $u_1$ and $u_2$.
Then, $N_{L_G}(P \setminus \{u_1, u_2\}) \cap S$ is empty and hence 
$L_G[S \cup P]$ is an $(S, P)$-steering.

For part 2, 
suppose $L_G[X]$ is an $(S', \{s\})$-steering where $s \in S$ and $S' = X \setminus \{s\}$.
Due to the proof of Lemma~\ref{lem:min-arch2steering}, $L_G[P]$ is a path
of $L_G$ between some valid pair of ports $u_1$ and $u_2$ of $(S, C)$ and, moreover,
$s$ is a hinge of this pair of ports.
Since the only vertex in $S$ that can be adjacent to any internal vertex of $L_G[P]$ is $s$,
$L_G[P]$ is a path in $B(C, u_1, u_2, s)$ between $u_1$ and $u_2$.
It is a chordless path, since if it had a chord in $B(C, u_1, u_2, s)$, then the graph $L_G[P]$ would contain
that chord and $L_G[P]$ would not be a path.

For the other direction, suppose that there is a valid pair of ports
$u_1$ and $u_2$ of $S$ with a hinge $s$ such that $L_G[P]$ is a chordless path of $B(C, u_1, u_2, s)$ between $u_1$ and $u_2$.
Then, $L_G[S']$, where $S' = X \setminus \{s\}$, is a cycle and 
$L_G(X)$ is an $(S', \{s\})$-steering.
\end{proof}

In implementing the generation of PMCs based on this lemma, 
we need to deal with the problem of suppressing duplicate
outputs of a single element without introducing super-polynomial delay.
We use the technique due to 
Bergougnoux, Kant\'e and Wasa \cite{bergougnoux2019disjunctive} to address this problem.
They used this technique for a particular problem of generating what they call minimal disjunctive separators.
We formulate their technique in the following theorem that can be used for wide range of generation problems.

Let $G$ be a graph on $n$ vertices.
We consider the general task of generating some structures defined on $G$.
Let $\calS(G)$ denote the set of those structures to be generated. 
Suppose $N$ subsets $\calS_1(G)$, \ldots, $\calS_N(G)$ of $\calS(G)$ are defined.
Let $I = \{i \mid 1 \leq i \leq N\}$. 
Suppose that the following conditions are satisfied.
\begin{enumerate}
    \item \label{item:union}
    $\calS(G) = \bigcup_{i \in I} \calS_i(G)$.
    \item $N = n^{O(1)}$.
    \item For each $s \in \calS(G)$ and $i \in I$, it can be decided whether $s \in \calS_i(G)$ in
        time $n^{O(1)}$.
    \item For each $i \in I$, there is an algorithm $\gen_i$ 
    that generates $\calS_i(G)$ with polynomial delay.
\end{enumerate}

\begin{theorem} \label{thm:poly-delay}
    Under the above assumptions, $\calS(G)$ can be generated with polynomial delay.
\end{theorem}
We prove this theorem by showing that Algorithm~\ref{alg:poly-delay-gen} generates $\calS(G)$ with
polynomial delay. The idea of this algorithm is as follows.
For each $s \in \calS(G)$, let us say that $i \in I$ \emph{owns} $s$ if $i$ is the largest
$j$ such that $s \in \calS_j(G)$.
We let $\gen_i$, where $i$ owns $s$, be responsible for the output of $s$ and suppress the output of $s$
from $\gen_j$ for other $j$. The executions of $\gen_i$ for
$i \in I$ are scheduled in such a way that at most a polynomial number of outputs are suppressed before
an unsuppressed output occurs.
\begin{algorithm}[t]
\caption{Generation of $\calS(G)$, using sub-generators $\gen_i$ for $\calS_i(G)$, $i \in I$}
\Comment{We say \emph{emit} $s$ to mean that this algorithm outputs $s$, in order to distinguish this action from the output events of the sub-generators}\label{alg:poly-delay-gen}
\LineComment{We say that an output of a sub-generator is \emph{suppressed} if it is not emitted}
\begin{algorithmic}[1]
\Ensure Generate all members of $\calS(G)$
\For {each $i \in I$}
 \State Initiate $\gen_i$ and immediately suspend it
\EndFor
\State $i^* \gets 1$
\Repeat
 \State $\mathop{ascending} \gets \mathop{true}$
 \While {$\mathop{ascending}$}
   \State Execute $\gen_{i^*}$ up to the next event
       \label{line:while_top}
   \If {the event is the termination event}
     \State $\mathop{ascending} \gets \mathop{false}$
   \Else \Comment{the event is an output event}
     \State Let $s$ be the object to be output
       \If {$s \in \calS_i(G)$ for some $i > i^*$}
          \Comment{the output of $s$ is suppressed}
       \State $i^* \gets $ the smallest $j > i^*$
         such that $\gen_j$ has not been terminated
         \label{line:suppress}
         \State 
         \Comment{such $j$ exists, see the proof}
       \Else   
       \State Emit $s$ 
       \State $\mathop{ascending} \gets \mathop{false}$
       \EndIf
   \EndIf
 \EndWhile 
 \If {there is some $j$ such that $\gen_j$ has not been terminated}
   \State $i^* \gets$ the smallest $j$ such that $\gen_j$ has not been terminated
    \label{line:down}
 \EndIf
 \Until {$\gen_i$ has been terminated for every $i \in I$}
\end{algorithmic}
\end{algorithm}

\begin{lemma}
    \label{lem:poly-delay}
    \cref{alg:poly-delay-gen} generates $\calS(G)$ with polynomial delay,
    if all of the four conditions above are satisfied.
\end{lemma}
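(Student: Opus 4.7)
The plan is to establish three claims: (a) every $s \in \calS(G)$ is emitted exactly once; (b) the update on line~\ref{line:suppress} is always well defined; and (c) the time between two consecutive events is $n^{O(1)}$. For each $s \in \calS(G)$ I introduce its \emph{owner} $i_{\max}(s) := \max\{i \in I : s \in \calS_i(G)\}$, which exists by condition~(\ref{item:union}).

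For (a), whenever $\gen_{i^*}$ outputs some $s$, the suppression test asks whether $s \in \calS_i(G)$ for some $i > i^*$; this holds precisely when $i^* < i_{\max}(s)$, while $i^* > i_{\max}(s)$ is excluded because $\gen_{i^*}$ generates only elements of $\calS_{i^*}(G)$. Hence the output of $s$ by $\gen_{i_{\max}(s)}$ is emitted and every other output of $s$ is suppressed; since the outer \textbf{repeat} loop drives every $\gen_i$ to its termination event, $\gen_{i_{\max}(s)}$ does process $s$, yielding the unique emission. The main obstacle is (b): I must verify that, whenever line~\ref{line:suppress} is reached, the set of non-terminated indices strictly greater than the current $i^*$ is non-empty. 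I would prove the stronger invariant that $\gen_{i_{\max}(s)}$ itself is not yet terminated at that moment, so $j = i_{\max}(s)$ is always a valid target. Suppose for contradiction that $\gen_{i_{\max}(s)}$ had already been terminated; then its termination event must have been processed at some earlier moment when $i^* = i_{\max}(s)$. But $i^*$ can only attain a value $v$ either by a reset on line~\ref{line:down}, which requires all smaller indices to be terminated, or by an advancement on line~\ref{line:suppress}, whose destination is the smallest non-terminated index above some lower value. Since the currently-active $\gen_{i^*}$ is non-terminated at the present moment, and terminations are permanent, it was non-terminated at every earlier moment; this rules out any past reset to $i_{\max}(s)$, and any past advancement to $i_{\max}(s)$ could only have come from an index at least as large as the current $i^*$. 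A backward induction on how $i^*$ had previously climbed to such an index eventually contradicts the non-termination of $\gen_{i^*}$, establishing the invariant.

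For (c), a single iteration of the inner while loop costs $n^{O(1)}$: executing $\gen_{i^*}$ up to its next event is $n^{O(1)}$ by condition~(4), and the suppression test reduces, by conditions~(2) and~(3), to $O(N) = n^{O(1)}$ polynomial membership queries. Within a single while loop, $i^*$ strictly increases on each suppression and stays in $\{1, \ldots, N\}$, so at most $N$ iterations occur before the loop ends on an emission (which closes the delay) or a sub-generator termination. Between two consecutive events of the meta-algorithm, the outer \textbf{repeat} loop may spawn several while loops, but every non-emitting one permanently terminates some sub-generator; hence at most $N + 1$ while loops fit into one delay, giving total work $O(N^2) \cdot n^{O(1)} = n^{O(1)}$ per delay. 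The $N$ initial sub-generator initiations contribute only $n^{O(1)}$ time and are absorbed into the first delay, yielding the claimed polynomial-delay bound.
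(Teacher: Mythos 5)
Your parts (a) and (c) are fine, but part (b) -- which is precisely the nontrivial claim flagged by the comment at line~\ref{line:suppress} -- rests on a false invariant. It is not true that the owner $\gen_{i_{\max}(s)}$ of the element $s$ currently being suppressed is still unterminated at that moment. Counterexample: $N=3$, $\calS_1(G)=\{a,b,c\}$, $\calS_2(G)=\{a\}$, $\calS_3(G)=\{b,c\}$, with $\gen_1$ outputting $b$, $c$, $a$ in this order. The suppression of $b$ at $\gen_1$ hands control to $\gen_2$, which outputs and emits $a$; control returns to $\gen_1$ via line~\ref{line:down}; the suppression of $c$ hands control to $\gen_2$ again, whose next event is its termination; control returns to $\gen_1$, which then outputs $a$. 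This output is suppressed (since $a\in\calS_2(G)$), yet the owner $\gen_2$ of $a$ is already terminated, and line~\ref{line:suppress} must fall through to $\gen_3$. So the target $j=i_{\max}(s)$ is not always valid, and your backward-induction sketch cannot be completed -- it is vague exactly where it would have to fail (``eventually contradicts the non-termination of $\gen_{i^*}$''): the trace-back only shows that every past advancement to $i_{\max}(s)$ came from an index in $[i^*, i_{\max}(s))$, which yields no contradiction, as the run above demonstrates.

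What is needed instead (and what the paper does) is a counting argument whose witness is in general a different element than the current $s$: for each $i$, among the members of $\calS_i(G)$ not owned by $i$, the number emitted so far never exceeds the number already suppressed as outputs of $\gen_i$ (an invariant maintained at line~\ref{line:while_top}). Hence, immediately after a suppression at $i^*$, some previously suppressed member of $\calS_{i^*}(G)$ has not yet been emitted, and its owner -- an index greater than $i^*$ -- cannot have terminated, because a terminated generator has already output, and thereby caused the emission of, every element it owns. With that claim established, your argument (a) for exactly-once emission via owners and your argument (c) giving the $O(N^2)\cdot n^{O(1)}$ bound on the delay go through and essentially coincide with the paper's (whose delay analysis is terser but the same in substance).
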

\begin{proof}
Let us say that $i \in I$ \emph{owns} $s \in \calS(G)$ if $s \in \calS_i(G)$ but
$s \not\in \calS_{i'}(G)$ for every $i' > i$. We also say that $\gen_i$ is the owner of
if $i$ owns $s$. Note that the set of
members of $\calS(G)$ owned by $i$ is the set of those generated by $\gen_i$ without
being suppressed in Algorithm~\ref{alg:poly-delay-gen}.

We first prove the claim given as a comment in the algorithm:
when an output of $\gen_i$ is suppressed at Line~\ref{line:suppress}, there
is some $j > i$ such that $\gen_{j}$ has not been terminated.
Let $\calS_i'(G)$ denote the set of members of $\calS_i(G)$ that are not owned by $i$.
At each point in the algorithm execution, let $\epsilon_i$ denote the
number of elements of $\calS_i'(G)$   that have been emitted and $\sigma_i$ the
number of elements of $\calS_i'(G)$ whose output by $\gen_i$ has been suppressed.
We claim that the invariant $\epsilon_i \leq \sigma_i$
holds at line~\ref{line:while_top}, the beginning of the while loop, for every $i \in I$. 
This certainly holds at the start of the algorithm. Before a member $s$ of $\calS_i'(G)$ 
is emitted for the first time, a suppression of the output of some member $s'$ of $\calS_i'(G)$ 
by $\gen_i$ must occur, in order for the owner of $s$ to obtain the control.
Unless $\sigma_i = |\calS_i'(G)|$, $\gen_i$ has not been terminated and the control
transfers to $\sigma_j$ for some $j \leq i$ after the emission of $s$.
Thus, every increment of $\epsilon_i$ is preceded by an increment of $\sigma_i$
and therefore the invariant holds. Therefore, at line~\ref{line:suppress},
$\sigma_{i^*} > \epsilon_{i^*}$ holds and the owner of some $s \in \calS_{i^*}'(G)$ has not been
terminated. This proves the claim maid in the comment at at line~\ref{line:suppress}.

Thus, every step of the algorithm is well-defined and hence it terminates after executing
$\gen_i$ for every $i \in I$ to termination. Due to Condition~\ref{item:union},
every member of $\calS(G)$ is owned by some $i \in I$ and hence is emitted exactly once, 
when it is output by its owner.

The delay in this algorithm is polynomial time, 
since each while iteration runs in polynomial time, 
each while iteration ends with an emission or the termination event of a sub-generator, and 
there are only $N = n^{O(1)}$ termination events. 
\end{proof}

We now turn to the application of this technique to our goal.

For a minimally separated component $C$ of a triconnected plane graph $G$ with $|S| \geq 4$ 
where $S = N_G(C)$,
let $\Pi(G, C)$ denote the set of PMCs $X$ such that $C \in \calC_G(X)$ and
$L_G[X]$ is a steering.

\begin{proposition}
\label{prop:pi-g-c-nonempty}
    Let $G$ be a triconnected plane graph and $C$ a minimally separated component of $G$
    such that $|N_G(C)| \geq 4$. Then, $\Pi(G, C) \neq \emptyset$.
\end{proposition}
\begin{proof}
Let $S = N_G(C)$ and let $C'$ be the other full component of $S$.
Let $H$ be an arbitrary minimal triangulation of $G$ in which $S$ is a clique.
Let $X$ be an arbitrary maximal clique of $H[S \cup C']$ that contains $S$.
Since $S$ is a separator of $H$, $X$ is a maximal clique of $H$.
Since no maximal clique of a chordal graph is a minimal separator, $X$ is a proper
superset of $S$ and belongs to $\Pi(G, C)$.
\end{proof}

\begin{lemma}
    \label{lem:gen-for-compo}
    Given a triconnected plane graph $G$ and a minimally separated component $C$ of $G$ with 
    $|S| \geq 4$ where $S = N_G(C)$, 
    $\Pi(G, C)$ can be generated with polynomial delay.
\end{lemma}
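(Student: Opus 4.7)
The plan is to apply the Bergougnoux-Kant\'{e}-Wasa framework (\cref{alg:poly-delay-gen} and \cref{lem:poly-delay}) by assembling a polynomial-size family of sub-generators whose union covers $\Pi(G,C)$, each running with polynomial delay and admitting a polynomial-time membership test.

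Let $C'$ denote the other full component associated with $S$, which exists and is unique by the remark after \cref{prop:min-sep-of-triconn}. Any $X \in \Pi(G,C)$ satisfies $S \subseteq X$ and $X \cap C = \emptyset$, hence $X = S \cup P$ with $P \subseteq C'$. Combining \cref{thm:PMC-is-steering} with \cref{lem:steering-socket} yields a complete case split of the shape of $P$: either (a) $|P| = 1$, i.e.\ $P = \{u\}$ for some $u \in C'$ with $N_{L_G}(u) \cap S$ neither empty nor a slot of $L_G[S]$; or (b) $|P| \geq 2$, and there exist a socket $\{R_1, R_2\}$ of $L_G[S]$ and a port pair $\{u_1, u_2\}$ on it in $C'$ such that $L_G[P]$ is a chordless path between $u_1$ and $u_2$, either in $A(C, u_1, u_2)$ or in $B(C, u_1, u_2, s)$ for some hinge $s$ of $\{R_1, R_2\}$.

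I would first precompute the set of ports of $S$ lying in $C'$ and, grouped by slot, list the port pairs realizing each socket; both collections have polynomial size. I then define one sub-generator for case (a) that simply iterates over the at most $|C'|$ valid singletons, and, for case (b), one sub-generator per triple $(\{R_1, R_2\}, \{u_1, u_2\}, \tau)$, where $\tau$ is either the tag ``$A$'' or a hinge $s$ of $\{R_1, R_2\}$; this sub-generator invokes the chordless-path generator of Uno and Satoh (\cref{lem:chordless}) on $A(C, u_1, u_2)$ or $B(C, u_1, u_2, s)$ between $u_1$ and $u_2$, emitting $X = S \cup V(p)$ for each produced chordless path $p$. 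There are $O(|S|^2)$ sockets, $O(n^2)$ port pairs per socket, and at most two hinges per socket (by the remark following \cref{def:steering}), so the total number of sub-generators is $n^{O(1)}$.

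Each sub-generator runs with polynomial delay by \cref{lem:chordless}. Given any candidate $X$ and a sub-generator index, ownership is decided in polynomial time by computing $P = X \setminus S$ and checking that $L_G[P]$ has the prescribed singleton or chordless-path shape with the prescribed endpoints, inside the prescribed ambient graph $A$ or $B$. By the case split above, the union of the sub-generators' outputs is exactly $\Pi(G,C)$, so the hypotheses of \cref{lem:poly-delay} hold and \cref{alg:poly-delay-gen} generates $\Pi(G,C)$ with polynomial delay. The main care I foresee lies in the duplication handling: a single PMC $X$ can admit several $(S, P)$-steering bipartitions and thus be produced by several sub-generators, but that is precisely the duplication that the Bergougnoux-Kant\'{e}-Wasa ownership mechanism is designed to absorb, provided the membership test is implemented consistently across every case of \cref{lem:steering-socket}.
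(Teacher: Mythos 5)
Your proposal is correct and follows essentially the same route as the paper: reduce membership in $\Pi(G,C)$ to the $(S,P)$-steering characterization, use Lemma~\ref{lem:steering-socket} to turn generation into chordless-path enumeration in $A(C,u_1,u_2)$ or $B(C,u_1,u_2,s)$ via Lemma~\ref{lem:chordless}, and absorb duplicate outputs with the Bergougnoux--Kant\'e--Wasa scheme of Lemma~\ref{lem:poly-delay}. The only differences are organizational: the paper invokes the deduplication machinery only when $|S|=4$ (arguing directly that no duplicates arise for $|S|\geq 5$, where each socket has at most one hinge), whereas you apply it uniformly over all (socket, port pair, tag) triples, and you explicitly add a sub-generator for the $|P|=1$ case, which the paper's written procedure leaves implicit.
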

\begin{proof}
Based on Lemma~\ref{lem:steerings-gen}, we generate $\Pi(G, C)$ as follows.
We first generate $X \in \Pi(G, C)$ such that $L_G[X]$ is an $(S, P)$-steering where $P = X \setminus S$.
To do so, for every valid pair of ports $u_1$ and $u_2$ of
$S$, we generate all chordless paths of $A(C, u_1, u_2)$ between $u_1$ and $u_2$,
where $A(C, u_1, u_2)$ is as defined in Lemma~\ref{lem:steerings-gen}, using the algorithm due to Sato and Uno
given in Lemma~\ref{lem:chordless}. For each path $p$ generated, we output $S \cup V(p)$.
Since the number of valid pairs of ports is $O(n^2)$, the generation of $X$ in this category is with polynomial delay
even if some valid pairs do not yield any such $X$.

We then generate $X \in \Pi(G, C)$ such that $L_G[X]$ is an $(S', \{s\})$-steering where $s \in S$ and
$S' = X \setminus \{s\}$.
To do so, for every valid pair of ports $u_1$ and $u_2$ of
$S$ with hinges and for each hinge $s$ of this pair, we generate all chordless paths of 
$B(C, u_1, u_2, s)$ between $u_1$ and $u_2$. For each path $p$ generated, we output $S \cup V(p)$.
The valid pair of $u_1$ and $u_2$ may have more than one ports when 
$|S| = 4$, which could cause duplicate outputs of an identical $X$. We suppress such duplicate outputs
using Theorem~\ref{thm:poly-delay}. Let $\calT$ be the set of all triples $(u_1, u_2, s)$
where $(u_1, u_2)$ is a valid pair of ports of $(S, C)$ and $s$ is a hinge of this pair.
Index the set $\calT$ by $I = \{1, \ldots, N\}$ where $N = |\calT|$.
For each $i \in I$, let $\calP_i$ denote the
set of chordless paths of $B(C, u_1, u_2, s)$ between $u_1$ and $u_2$, where
$(u_1, u_2, s)$ is the triple in $\calT$ indexed by $i$. Let $\calP = \bigcup_{i \in I} \calP_i$.
Our goal is to generate $\calP$ without duplication. We verify the conditions for Theorem~\ref{thm:poly-delay}
to be applied.
\begin{enumerate}
\item $\calP = \bigcup_{i \in I} \calP_i$ by definition.
\item $N = |I| = O(n^3)$.
\item For each path $p$ in $\calP$ and each $i \in I$, it can be decided whether $p \in \calP_i$ in
        time $n^{O(1)}$.
\item For each $i \in I$, we can generate $\calP_i$ with polynomial delay as described above.
\end{enumerate}
Therefore, Theorem~\ref{thm:poly-delay} applies and we can generate $\calP$ and hence all PMCs $X$ 
in this category, with polynomial delay.
\end{proof}
\begin{remark}
Since the PMCs $X$ in the second category in the above lemma, those such that
$L_G[X]$ is an $(S', \{s\})$-steering for some $s \in S$ where $S' = X \setminus \{s\}$,
also belong to $\Pi(G, C')$ for some full component associated with $S'$,
it might appear that the cumbersome generation of those PMCs described in
the above proof is unnecessary. However, if we generate only the members of $\Pi(G, C)$ in the
first category, those $X$ such that $L_G[X]$ is an $(S, P)$-steering where $P = X \setminus S$,
then it is possible for the generator of $\Pi(G, C)$ to produce no outputs.
This is a problem, since there is no readily provable polynomial bound on the number of 
successive components $C$ for which the generation of $\Pi(G, C)$ produces no outputs.
\end{remark}

The following is the main result of this section.
\begin{theorem}
\label{thm:poly-delay-pmcs}
Given a triconnected plane graph $G$, 
$\Pi(G)$ can be generated with polynomial delay.
\end{theorem}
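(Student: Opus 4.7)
The plan is to split $\Pi(G) = \Pi_3(G) \cup \Pi'(G)$ as suggested in the preamble, where $\Pi_3(G)$ collects the PMCs $X$ with $|N_G(C)| = 3$ for every $C \in \CC_G(X)$. Lemma~\ref{lem:all-triang} tells us that each such $X$ satisfies $L_G[X] \cong K_4$, so I would generate $\Pi_3(G)$ trivially by iterating over the at most $\binom{n}{4}$ four-element subsets $X \subseteq V(G)$ that induce a $K_4$ in $L_G$ and checking the conditions of Lemma~\ref{lem:BT-lemma} in polynomial time. The real work lies in generating $\Pi'(G)$.

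For $\Pi'(G)$, I would apply the duplicate-suppression framework of Algorithm~\ref{alg:poly-delay-gen} with one sub-generator $\gen_v$ per vertex $v \in V(G)$. Let $\Pi_v(G)$ be the set of $X \in \Pi'(G)$ such that $v \in V(G) \setminus X$ and the unique component $C \in \CC_G(X)$ containing $v$ has $|N_G(C)| \geq 4$. Every $X \in \Pi'(G)$ has at least one component $C$ with $|N_G(C)| \geq 4$ and any $v \in C$ witnesses $X \in \Pi_v(G)$, so $\Pi'(G) = \bigcup_{v \in V(G)} \Pi_v(G)$; since $N = n = n^{O(1)}$ and membership in $\Pi_v(G)$ is clearly decidable in polynomial time, conditions 1--3 of Algorithm~\ref{alg:poly-delay-gen} are immediate.

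The core step is to realise each $\gen_v$ with polynomial delay. By Proposition~\ref{prop:min-sep-of-triconn} and the remark following it, every minimal separator $S$ of $G$ has exactly two full components which together partition $V(G) \setminus S$, so if $v \notin S$ then the component $C$ of $G[V(G) \setminus S]$ containing $v$ is automatically a full component with $N_G(C) = S$. Thus $\gen_v$ would proceed by using the second part of Lemma~\ref{lem:minsep-polydelay} to enumerate, with polynomial delay, the minimal separators $S$ of $G$ with $v \notin S$ and $|S| \geq 4$, and for each such $S$ invoking the generator of Lemma~\ref{lem:gen-for-compo} on $(G, C)$, where $C$ is $v$'s component, to enumerate $\Pi(G, C)$ with polynomial delay. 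Proposition~\ref{prop:pi-g-c-nonempty} guarantees that every such invocation emits at least one PMC, so the interleaved delay of $\gen_v$ stays polynomial, and each $X \in \Pi_v(G)$ is produced by $\gen_v$ exactly once because $S = N_G(C)$ is uniquely determined by $v$ together with $X$. Condition 4 of Algorithm~\ref{alg:poly-delay-gen} is then met, and Lemma~\ref{lem:poly-delay} delivers polynomial-delay generation of $\Pi'(G)$.

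The main obstacle I anticipate is filtering the Uno--Satoh enumeration of chordless cycles of $L_G[V(G) \setminus \{v\}]$ (Lemma~\ref{lem:chordless}) down to those of length at least $4$ without breaking polynomial delay. The chordless $3$-cycles we need to discard correspond to triangles of $L_G$, whose total count is at most $\binom{|V(G)|}{3} = n^{O(1)}$, so in the worst case only a polynomial amount of wasted work accumulates between consecutive size-$\geq 4$ separator emissions, and polynomial delay of $\gen_v$ is preserved.
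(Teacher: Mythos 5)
Your proposal is correct and follows essentially the same route as the paper: split off the $K_4$-type PMCs via Lemma~\ref{lem:all-triang}, define $\Pi_v(G)$ per vertex, realize each $\gen_v$ by enumerating minimal separators avoiding $v$ (Lemma~\ref{lem:minsep-polydelay}) and feeding the size-$\geq 4$ ones to Lemma~\ref{lem:gen-for-compo}, with Proposition~\ref{prop:pi-g-c-nonempty} ensuring polynomial delay, and then combining via Lemma~\ref{lem:poly-delay}. Your extra remark on discarding the polynomially many chordless triangles is a harmless (and slightly more explicit) treatment of a point the paper leaves implicit.
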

\begin{proof}
It suffices to show that $\Pi'(G)$, the set of PMCs $X$ of $G$ such that $|N_G(C)| \geq 4$
for some $C \in \calC_G(X)$,
can be generated with polynomial delay. Let $n = |V(G)|$.
    For each $v \in G$, let $\Pi_v(G)$ denote the
    union of $\Pi(G, C)$ over all minimally separated components $C$ of
    $G$ such that $v \in C$ and $|N_G(C)| \geq 4$.
    We have $\Pi'(G) = \bigcup_{v \in V(G)} \Pi_v(G)$ and $|V(G)| = n = n^{O(1)}$.
    Given $X \in \Pi(G)$ and $v \in V(G)$, it is straightforward to test if $X \in \Pi_v(G)$
    in polynomial time. For each $v$, $\Pi_v(G)$ can be
    generated with polynomial delay: we generate all minimal separators 
    $S$ such that $v \not\in S$ by the second part of Lemma~\ref{lem:minsep-polydelay}
    and, for each $S$ generated such that $|S| \geq 4$, generate
    $\Pi(G, C)$ for the full component $C$ of $S$ such that $v \in C$. 
    The delay in this algorithm is polynomial, since minimal separators are generated with polynomial delay and,
    for each minimal separator generated, at least one PMC is generated due to
    Proposition~\ref{prop:pi-g-c-nonempty}.
    
    Therefore, applying Theorem~\ref{thm:poly-delay}, we can generate $\Pi(G)$
    with polynomial delay.
\end{proof}

\section{Computing the planar treewidth}
An algorithm for generating PMCs immediately leads to a treewidth algorithm for triconnected planar graphs
because of Theorem~\ref{thm:bt-dp} due to Bouchitt{\'e} and Todinca~\cite{bouchitte2001treewidth}.
To extend the algorithm for general planar graphs, we use the following result from \cite{bodlaender2006safe}.
A vertex set $S$ of a graph $G$ is an \emph{almost clique} of $G$ if there is some $v \in S$ such that
$S \setminus \{v\}$ is a clique of $G$.
\begin{theorem}[Bodlaender and Koster \cite{bodlaender2006safe}]\label{thm:almost_clique}
Let $G$ be a graph and $S$ a minimal separator of $G$ that is an almost clique.
Let $C_i$, $1 \leq i \leq m$, be the components of $G[V(G) \setminus S]$ and let
$G_i = G[C_i \cup N_G(C_i)] \cup K(N_G(C_i))$, for $i = 1, \ldots, m$, be the
graph obtained from $G[C_i \cup N_G(C_i)]$, the subgraph of $G$ induced by the closed neighborhood of $C_i$,
by filling the open neighborhood of $C_i$ into a clique. Then $\tw(G) = \max\{|S|, \max_{i \in \{1, \ldots, m\}} \tw(G_i)\}$.
\end{theorem}

We use this theorem for the special case where $S$ is a two-vertex minimal separator: such $S$ is always an almost clique.

\begin{theorem}
Given a planar graph $G$, $\tw(G)$ can be computed in time $|\Pi(G)| n^{O(1)}$.
\end{theorem}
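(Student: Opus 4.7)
For triconnected planar $G$, the theorem reduces in one line to the two main technical results of the paper: Theorem~\ref{thm:poly-delay-pmcs} enumerates $\Pi(G)$ with polynomial delay, so the whole family is produced in total time $|\Pi(G)|\cdot n^{O(1)}$, and handing the list to the Bouchitt\'{e}--Todinca dynamic program (Theorem~\ref{thm:bt-dp}) then returns $\tw(G)$ within another $|\Pi(G)|\cdot n^{O(1)}$ steps. The remaining task is therefore to reduce the general planar case to this triconnected case.

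For an arbitrary planar $G$, I would use the standard reduction of treewidth along clique separators of size at most two. Since $\tw$ is the maximum of $\tw$ over connected components and then over blocks, we may assume $G$ is biconnected. Inside a biconnected piece, a $2$-separator $\{u,v\}$ partitions $G$ into sides that, after adjoining the virtual edge $\{u,v\}$, become graphs $G_1^{+},\ldots,G_t^{+}$ in which $\{u,v\}$ is a clique separator of the amalgam; a standard clique-separator argument then gives $\tw(G)=\max_i \tw(G_i^{+})$. Iterating this splitting produces the triconnected components of $G$, each of which is planar and is either a polygon or a bond (both of treewidth at most $2$, dispatched in linear time) or a triconnected planar graph to which the preceding paragraph applies.

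The one point that requires genuine verification is that the total running time summed over all triconnected components is still $|\Pi(G)|\cdot n^{O(1)}$; equivalently, $\sum_i |\Pi(G_i^{+})| \leq |\Pi(G)|\cdot n^{O(1)}$. The intended argument is that each PMC of a triconnected component $G_i^{+}$ extends in a canonical, polynomial-to-one fashion to a PMC of $G$ by absorbing the remaining triconnected components through the virtual edges; combined with the $O(n)$ bound on the number of components this yields the desired inequality. This triconnected-component bookkeeping, independent of the latching-graph machinery developed in Sections~\ref{sec:prelim} and~\ref{sec:charact}, is where I expect the main---and essentially routine---effort of the proof to go.
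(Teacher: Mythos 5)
Your triconnected case is exactly the paper's: run the polynomial-delay generator of Theorem~\ref{thm:poly-delay-pmcs} to get $\Pi(G)$ in $|\Pi(G)|\,n^{O(1)}$ total time and feed it to Theorem~\ref{thm:bt-dp}. Your reduction of the general case is also essentially the paper's decomposition in different clothing: the paper does not invoke SPQR/triconnected components explicitly but inducts on the number of minimal separators $S$ with $|S|\le 2$, replacing $G$ by the blocks $G_i=G[C_i\cup N_G(C_i)]\cup K(N_G(C_i))$ and using that such an $S$ is a safe separator, so $\tw(G)=\max\{|S|,\max_i\tw(G_i)\}$; iterating this lands on the same triconnected pieces with filled (virtual) edges that you describe.

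The one place where your proposal has a genuine gap is precisely the step you defer as ``routine bookkeeping,'' and the mechanism you sketch for it is not the right one. You propose to prove $\sum_i|\Pi(G_i^{+})|\le|\Pi(G)|\,n^{O(1)}$ via a ``canonical, polynomial-to-one extension'' of each PMC of a component to a PMC of $G$ by absorbing the other components through the virtual edges; you neither define this map nor show it is well-defined or polynomial-to-one, and it is unclear what ``absorbing'' should mean when the virtual edge $\{u,v\}$ is a non-edge of $G$ contained in the PMC. The paper's argument needs no extension map at all: every PMC of a block $G_i$ is \emph{verbatim} a PMC of $G$, because the blocks arise from $G\cup K(S)$ with $S$ a (safe) minimal separator, minimal triangulations of $G\cup K(S)$ are minimal triangulations of $G$, and the third PMC condition for the pair $u,v$ is witnessed in $G$ by a component of $\CC_G(X)$ on the other side of $S$; moreover, PMCs of distinct blocks are distinct. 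This gives $\sum_i|\Pi(G_i)|\le|\Pi(G)|$ exactly, with no polynomial slack, and the induction closes. So your overall architecture is sound, but to complete the proof you should replace the extension-map sketch by the inclusion $\Pi(G_i)\subseteq\Pi(G)$ (via the safe-separator and Bouchitt\'e--Todinca facts) together with disjointness across blocks.
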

\begin{proof}
We assume that $G$ is biconnected. Otherwise, we compute the treewidth of each biconnected component and
take the maximum.

We first prove the case where $G$ is triconnected. We compute a planar embedding of $G$ in time $O(n)$ \cite{HopcroftT74}.
Then we apply our generation algorithm to compute $\Pi(G)$ in time $|\Pi(G)| n^{O(1)}$.
Finally, we apply Theorem~\ref{thm:bt-dp} to obtain $\tw(G)$ in time $|\Pi(G)| n^{O(1)}$.

For a general planar graph $G$, let $s(G)$ denote the number of two-vertex separators of $G$.
We prove the theorem by induction on $s(G)$. The base case $s(G) = 0$, where $G$ is triconnected, is already proved.
Suppose $s(G) > 0$ and let $S$ be an arbitrary two-vertex separator of $G$.
Let $C_1$, \ldots, $C_m$ be
the components of $G[V(G) \setminus S]$. Since $G$ is biconnected, we have $N(C_i) = S$ for every $i \in \{1, \dots, m\}$.
Therefore, $S$ is a minimal separator of $G$ and, due to Theorem~\ref{thm:almost_clique},
we have $\tw(G) = \max\{2, \max_{i \in \{1, \ldots, m\}} \tw(G_i)\}$ where $G_i = G[C_i \cup S] \cup K(S)$.
Since $G_i$ is a subgraph of a minor of $G$ obtained by contracting $C_{i'} \cup {v}$ into a single vertex 
for arbitrary $i' \neq i$ and
$v \in S$, $G_i$ is planar. Moreover, $G_i$ is biconnected since any cut vertex of $G_i$ would be a cut vertex of $G$.
Therefore, we may apply the induction hypothesis to each $G_i$ and 
compute $\tw(G_i)$ for each $i$ in time $|\Pi(G_i)| n^{O(1)}$.
We claim that $\sum_i |\Pi(G_i)| \leq |\Pi(G)|$. To see this, let $X$ be an arbitrary PMC of $G_i$ and
$H$ a minimal triangulation of $G_i$ in which $X$ is a maximal clique. Let $H'$ be an arbitrary minimal triangulation
of $G$ such that $H'[V(G_i)] = H$. Since every minimal separator of $G$ that is a clique in a minimal triangulation of
$G$ is a minimal separator of that triangulation \cite{heggernes2006minimal}, $S$ is a minimal separator of $H'$.
Therefore, $X$ is a maximal clique of $H'$ and hence is a PMC of $G$. Moreover, 
$X$ cannot be a PMC of $G_j$ for any $j \neq i$ since $X$ is not a subset of $V(G_j)$. Therefore, the claim holds.
We conclude that 
we can compute $\tw(G)$ in time $|\Pi(G)| n^{O(1)}$.
\end{proof}

\section{Future work}
Our success in a special case of the open question whether $\Pi(G)$ can be computed in time $|\Pi(G)|n^{O(1)}$
does not seem to suggest any approach to solving the question for general graphs.
Much more modest goal is to address the question for general planar graphs. 
It might be possible to extend our generation algorithm to general planar graphs by finding some way of
handling PMCs that cross two-vertex minimal separators. Bart Jansen asked if the generation result could be extended to
graphs embedded on tori. We observe that latching graphs are not appropriate tool for that purpose, since
if $G$ is a triconnected graph embedded on a torus and $X$ is a PMC of $G$
then the subgraph of the latching graph of $G$, appropriately defined on tori,
induced by $X$ may have edge crossings: the proof of Proposition~\ref{prop:pmc-planar} relies essentially on the assumption
that $G$ is embedded in the sphere.
Studying upper bounds on $|\Pi(G)|$ is another avenue of research. There is a class of planar graphs 
for which $|\Pi(G)| = \Omega(1.442^n)$ \cite{FominKTV08}. These graphs are biconnected but not triconnected, so it 
would be interesting to ask if an upper bound smaller than this bound holds for triconnected planar graphs.

\bibliography{references}

\end{document}